 \providecommand{\F}{\mathbb{F}}
\date{}
\newtheorem{lemma}{Lemma}
\newtheorem{prop}[lemma]{Proposition}
\newtheorem{claim}[lemma]{Claim}
\newtheorem{defn}{Definition}
\newtheorem{rmk}{Remark}
\def \mC {\mathcal{C}}
\def \mC {\mathcal{C}}
\def \Xi {{X^{[i]}}}
\newcommand{\Z}{\mathbb{Z}}
\begin{document}

\onecolumn

\title{Explicit Constructions of Two-Dimensional Reed-Solomon Codes in High Insertion and Deletion Noise Regime}

\author{Tai Do Duc\thanks{Tai Do Duc is with the Division of Mathematical Sciences, School of Physical and Mathematical Sciences, Nanyang Technological University, Singapore 637371 (email: doductai001@e.ntu.edu.sg)}, Shu Liu\thanks{Shu Liu is with the National Key Laboratory of Science and Technology on Communications, University of Electronic Science and Technology of China, Chengdu 611731, China (email: shuliu@uestc.edu.cn).}, 
Ivan Tjuawinata\thanks{Ivan Tjuawinata is with the Strategic Centre for Research on Privacy-Preserving Technologies and Systems, Nanyang Technological University, Singapore 637553 (ivan.tjuawinata@ntu.edu.sg).}
~and Chaoping Xing
\thanks{ Chaoping Xing is with the Division of Mathematical Sciences, School of Physical and Mathematical Sciences, Nanyang Technological University,  Singapore 637371 (email: xingcp@ntu.edu.sg).}}
\maketitle

\begin{abstract}
Insertion and deletion (insdel for short) errors are synchronization errors in communication systems caused by the loss of positional information in the message. 
Reed-Solomon codes have gained a lot of interest due to its encoding simplicity, well structuredness and list-decoding capability~\cite{GS1999} in the classical setting. This interest also translates to the insdel metric setting, as the Guruswami-Sudan decoding algorithm~\cite{GS1999} can be utilized to provide a deletion correcting algorithm in the insdel metric~\cite{SW2003}. 
Nevertheless, there have been few studies on the insdel error-correcting capability of Reed-Solomon codes.

Our main contributions in this paper are explicit constructions of two families of $2$-dimensional Reed-Solomon codes with insdel error-correcting capabilities asymptotically reaching those provided by the Singleton bound. The first construction gives a family of Reed-Solomon codes with insdel error-correcting capability asymptotic to its length. The second construction provides a family of Reed-Solomon codes with an exact insdel error-correcting capability up to its length. Both our constructions improve the previously known construction of $2$-dimensional Reed-Solomon codes \cite{TS2007} whose insdel error-correcting capability is only logarithmic on the code length. 

\end{abstract}

\section{Introduction}~\label{introduction}

Insertion and deletion (insdel for short) errors are synchronization errors~\cite{stoc2017},~\cite{HSS}  in communication systems caused by the loss of positional information of the message. Without loss of generality, we can consider only insertions and deletions for synchronization errors, since a substitution can be replaced by a deletion followed by an insertion and this at most doubles the number of operations. Insdel codes have recently attracted lots of attention due to their applicabilities in many interesting fields such as DNA storage, DNA analysis~\cite{SJFFH},  \cite{RW2005}, race-track memory error correction \cite{CK2017} and language processing \cite{BM2000}, \cite{Och03}. 

In the insdel setting, the distance between two codewords is the smallest number of insertions and deletions needed to transform one codeword into the other codeword. The minimum insdel distance of a code is the minimum insdel distance among its codewords. The study of codes with insdel errors was pioneered by Levenshtein, Varshamov and Tenengolts in the 1960s~\cite{VT65},\cite{1965},\cite{1967} and \cite{1984}. Sloane~\cite{NJA} constructed a family of codes capable of correcting single deletion. He focused on binary codes and discussed difficulties of constructing deletion correcting codes. By using combinatorial designs, Bours~\cite{Bours} constructed codes of lengths $4$ and $5$ which correct up to $2$ and $3$ deletions, respectively. More constructions of codes of length $5$ capable of correcting $3$ deletions are given by Mahmoodi~\cite{mah}. Using incomplete directed designs, Yin~\cite{Yin} and Shalaby et al.~\cite{Shalaby} gave constructions of codes of length $6$ capable of correcting $4$ deletions. Recently, there have been several constructions of codes with high insdel error-correcting capabilities (or insdel list decoding capabilities) through synchronization strings~\cite{stoc2017},\cite{HS2018}, \cite{HSS} and concatenation~\cite{GW11},\cite{VR16},\cite{Japan},\cite{2019insdel}. 

In addition to extensive studies on the decoding of (generalized) Reed-Solomon codes \cite{S1997},\cite{GS1999},\cite{RR2000},\cite{KV2003},\cite{GV2005}, it has also been observed \cite{SW2003},\cite{2004} that these codes have an efficient decoding algorithm against deletion errors. Despite the existence of such algorithm, construction of Reed-Solomon codes to rectify insdel errors has relatively been less explored.
 In this paper, we aim to provide a more thorough investigation on Reed-Solomon codes in the insdel metric.



\subsection{Previous results}
The Singleton bound \cite{stoc2017} implies that a Reed-Solomon code of length $n$ and dimension $k$ has deletion error-correcting capability at most $n-k$.
There have been few constructions of Reed-Solomon codes with high capabilities, but none has come close to this bound. In 2004, Y. Wang, L. McAven and R. Safavi-Naini~\cite{2004} constructed a class of Reed-Solomon codes of length $5$ and dimension $2$ which can correct one deletion.
They also provided numerical results on deletion correcting capabilities of generalized Reed-Solomon codes.
Their construction relies on the rank analysis of matrices where the number of matrices increases exponentially on the code length. Hence it may not be feasible to generalize this approach to a larger length.
At the same year, D. Tonien and R. Safavi-Naini~\cite{TS2007} gave constructions of deletion correcting codes by using Reed-Solomon codes and their subcodes. 
They constructed a family of Reed-Solomon codes of length $n$ and dimension $k$ with deletion error-correcting capability $\log_{k+1}n-1.$ They also provided studies on subcodes of Reed-Solomon codes and obtained subcodes with deletion error-correcting capability $\frac{n}{k+1}-1.$ Note that all the constructed codes have deletion error-correcting capabilities much smaller than $n-k,$ the Singleton bound on deletion error-correcting capability of an $[n,k]$ Reed-Solomon code. Moreover, these constructions are only explicit given the existence of a certain polynomial used to generate their evaluation points. In general, the existence of such a polynomial is unknown. 

In 2007, L. McAven and R. Safavi-Naini~\cite{MS2007} provided an upper bound for the insdel error-correcting capability of Reed-Solomon codes with dimension $2$ over prime fields. They show that if the code has length $n\geq 3$, then its insdel error-correcting capability is at most $n-3$. This bound is tighter than the Singleton bound, which states that the insdel error-correcting capability of the same code is at most $n-2$. As a result, Reed-Solomon codes of length $n\geq 3$ and dimension $2$ over prime fields can never achieve the Singleton bound.

\subsection{Our results}

The Singleton bound \cite{stoc2017} implies that any $[n,k]$ linear code over $\F_q$ has insdel distance $d\leq 2n-2k+2$. Our first result in this paper generalizes the result by L. McAven and R. Safavi-Naini~\cite{MS2007} mentioned above. We prove that if $q$ is \textit{large enough} compared to $n$, then a $k$-dimensional Reed-Solomon code of length $n$ over $\F_q$ can never achieve the equality $d=2n-2k+2$ in the Singleton bound. 

\begin{restatable}{thm}{improvedbound}\label{improvedbound}
Let $\mC$ be a $k$-dimensional Reed-Solomon code of length $n$ over $\F_q$ with $2\leq k <n$. Let $d$ denote the minimum insdel distance of $\mC$. If $q\geq n^2$, then
\begin{equation} \label{insdeldistance}
d\leq 2n-2k.
\end{equation}
As a result, these codes can only correct at most $n-k-1$ insdel errors. 
\end{restatable}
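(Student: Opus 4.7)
My plan is to bypass the Singleton bound by an explicit interpolation construction: I will exhibit two distinct codewords $\bu,\bv\in\mC$ whose longest common subsequence (LCS) has length at least $k$. Since the insdel distance between two length-$n$ words equals $2n-2\,\mathrm{LCS}$, this will immediately yield $d_I(\bu,\bv)\leq 2n-2k$ and hence $d\leq 2n-2k$.

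Let $\alpha_1,\ldots,\alpha_n\in\F_q$ be the distinct evaluation points defining $\mC$. First, I fix any non-constant tuple $(y_1,\ldots,y_k)\in\F_q^k$, for concreteness $y_1=\cdots=y_{k-1}=0$ and $y_k=1$ (such a tuple exists since $q\geq 2$). Next, I apply Lagrange interpolation twice on sets of $k$ distinct points to define two polynomials of degree less than $k$: let $f\in\F_q[x]$ be the unique such polynomial with $f(\alpha_i)=y_i$ for $i=1,\ldots,k$, and let $g\in\F_q[x]$ be the unique such polynomial with $g(\alpha_{j+1})=y_j$ for $j=1,\ldots,k$. These use the $k+1$ distinct evaluation points $\alpha_1,\ldots,\alpha_{k+1}$, available because $k<n$. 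Both $f$ and $g$ lie in the defining evaluation space of $\mC$, so the codewords $\bu=(f(\alpha_1),\ldots,f(\alpha_n))$ and $\bv=(g(\alpha_1),\ldots,g(\alpha_n))$ belong to $\mC$. By construction $\bu_i=y_i=\bv_{i+1}$ for $i=1,\ldots,k$, so the strictly increasing index pair $(1,2,\ldots,k)$ in $\bu$ matched against $(2,3,\ldots,k+1)$ in $\bv$ witnesses $\mathrm{LCS}(\bu,\bv)\geq k$.

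The remaining step is to verify $\bu\neq\bv$, or equivalently $f\neq g$. Suppose $f=g$; then for each $i=2,\ldots,k$ both interpolation systems prescribe a value of $f$ at $\alpha_i$, giving $y_i=y_{i-1}$ and hence $y_1=y_2=\cdots=y_k$, which contradicts the non-constant choice. Therefore $\bu\neq\bv$ are two distinct codewords of $\mC$ with $d_I(\bu,\bv)\leq 2n-2k$, completing the proof.

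I expect the main subtlety to be precisely this final non-collapse check $f\neq g$: the construction is transparent, but one must rule out the pathological possibility that the two shifted interpolants coincide. I also note that my argument does not actually invoke the hypothesis $q\geq n^2$; it requires only that $n$ distinct evaluation points exist, which is automatic for any RS code of length $n$ over $\F_q$. If the authors' proof genuinely needs $q\geq n^2$, it presumably follows a different, more global route---perhaps a counting argument bounding how many codeword pairs can have LCS below $k$, where a large field becomes necessary for a pigeonhole step---but the direct interpolation above already suffices.
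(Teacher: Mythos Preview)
Your proof is correct and genuinely different from the paper's. The paper argues by counting: it restricts to the subcode $\mC'$ of codewords with pairwise distinct entries, lower-bounds $|\mC'|\geq q^{k-1}(q-n^2/2)$, and then observes that if the minimum insdel distance were $d$, any two length-$(n-d/2+1)$ subwords drawn from $\mC'$ must be distinct; comparing the resulting count against the total number of ordered tuples with distinct entries forces $d/2-1\leq n-k-1$. The hypothesis $q\geq n^2$ is used exactly to ensure $q-n^2/2\geq q/2>0$ so that the counting inequality has content. Your argument, by contrast, is a direct two-codeword construction via shifted Lagrange interpolation, and as you correctly note it requires nothing beyond $k<n$ (which is assumed) and the existence of two distinct field elements. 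So your proof is both shorter and strictly stronger: it removes the hypothesis $q\geq n^2$ entirely. What the paper's route buys is a template that could in principle be pushed further---a global pigeonhole over all codeword pairs might detect finer obstructions than a single explicit pair---but for the stated bound $d\leq 2n-2k$ your construction is the cleaner argument.
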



Theorem~\ref{improvedbound} implies that for a large enough alphabet, an insdel Reed-Solomon code can never achieve the Singleton bound. A natural question is whether the upper bound of insdel distance given in Theorem~\ref{improvedbound} can be achieved by some Reed-Solomon codes.
The answer of this question is affirmative in the case of $2$-dimensional Reed-Solomon codes.
We construct two classes of $2$-dimensional Reed-Solomon codes whose insdel error-correcting capabilities are close to $n-3$, the capability provided by Theorem \ref{improvedbound}.
In the first construction, our code can correct up to $n(1+o(1))$ insdel errors. In the second construction, our code can correct up to exactly $n-3$ insdel errors. While the first construction is more self-contained and does not require any prior knowledge in number theory, the second one uses ideas in the study of cyclotomic numbers \cite{do2019}.

\begin{restatable}{thm}{mainone} \label{main1}
Let $\epsilon>0.$
When $q$ is sufficiently large, there exists an explicit family of $2$-dimensional Reed-Solomon codes over $\F_q$ with length $n$ and minimum insdel distance $d$ satisfying
$$n=\exp(\log(q)^\epsilon) \ \ \text{and} \ \ d\geq 2(1-\epsilon)n+O(1/\epsilon^2).$$ 
\end{restatable}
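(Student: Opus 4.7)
First I would reduce the problem to a longest common subsequence (LCS) problem. For a $2$-dimensional Reed-Solomon code with evaluation points $\alpha_1,\dots,\alpha_n\in\F_q$, codewords have the form $c_{a,b}=(a+b\alpha_i)_{i=1}^n$, and the insdel distance between two distinct codewords $c_{a,b}$ and $c_{a',b'}$ equals $2n-2L$, where $L$ is their LCS. A short case analysis shows that the degenerate cases $b=0$ or $b'=0$ give $L\le 1$, while for $b,b'\neq 0$ the matches are precisely pairs $(i,j)$ with $\alpha_{j}=\lambda\alpha_{i}+\mu$, where $\lambda=b/b'$ and $\mu=(a-a')/b'$. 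Writing $L_T$ for the length of the longest monotone matching $i_1<\dots<i_L$, $j_1<\dots<j_L$ associated with a non-identity affine map $T(x)=\lambda x+\mu$, the theorem reduces to exhibiting evaluation points with $L_T\le \epsilon n+O(1/\epsilon^2)$ uniformly in $T$.

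For the construction, my plan is to take $\alpha_i=g^{s_i}$ with $g$ a primitive element of $\F_q$, where $0\le s_1<s_2<\dots<s_n\le q-2$ is an explicit exponent sequence with a Sidon-type property (all pairwise differences $s_j-s_i$ distinct) and controlled growth. Standard elementary constructions produce Sidon sets reaching the required length $n=\exp(\log(q)^\epsilon)$ inside $\{0,\dots,q-2\}$, consistent with the paper's claim of a number-theory-free construction. This choice immediately handles the purely multiplicative maps $T(x)=\lambda x$: the match equation $g^{s_j}=\lambda g^{s_i}$ becomes $s_j-s_i=\log_g\lambda$, and the Sidon property forces $L_T\le 1$.

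The delicate step is the general affine case $T(x)=\lambda x+\mu$ with $\mu\neq 0$. Here I would exploit a cross-ratio identity: any three matched pairs $(i_k,j_k)$, $k=1,2,3$, for the same $T$ satisfy
\[
(\alpha_{j_1}-\alpha_{j_2})(\alpha_{i_1}-\alpha_{i_3})=(\alpha_{j_1}-\alpha_{j_3})(\alpha_{i_1}-\alpha_{i_2}).
\]
Substituting $\alpha_i=g^{s_i}$ turns this into an algebraic relation on the exponents $s_{i_k},s_{j_k}$ which, by the Sidon and growth properties, can only be satisfied in a sparse, controlled way. Partitioning a putative long monotone chain into blocks and counting the admissible patterns would yield the bound $L_T\le \epsilon n+O(1/\epsilon^2)$, with the additive $O(1/\epsilon^2)$ slack absorbing end-of-chain boundary behavior.

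The main obstacle I anticipate is the uniform bound over the two-parameter family of non-identity affine maps in the general case. The regime $n=\exp(\log(q)^\epsilon)$ sits well below the Sidon threshold $\Theta(\sqrt{q})$, leaving room for purely combinatorial counting; but promoting the per-$T$ bound to a tight, parameter-uniform bound without any number-theoretic input is the heart of the proof.
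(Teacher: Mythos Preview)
Your reduction to bounding $|\delta S\cap(S+\gamma)|$ (equivalently, the number of matches under a non-identity affine map) is the same as the paper's framework in Section~\ref{sec:imp}, and your treatment of the purely multiplicative case via a Sidon exponent set is correct. The gap is the general affine case $\mu\neq 0$, which you yourself flag as ``the heart of the proof'' but do not actually resolve. The Sidon property constrains \emph{differences of exponents} $s_j-s_i$ in $\Z_{q-1}$; the equation $\delta g^{s_i}=g^{s_j}+\gamma$ is a relation among \emph{field elements}, and your cross-ratio identity $(g^{s_{j_1}}-g^{s_{j_2}})(g^{s_{i_1}}-g^{s_{i_3}})=(g^{s_{j_1}}-g^{s_{j_3}})(g^{s_{i_1}}-g^{s_{i_2}})$ mixes additive and multiplicative structure in $\F_q$ in a way that the Sidon condition on the $s_i$ simply does not control. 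There is no mechanism in your plan that turns this identity into a combinatorial constraint strong enough to bound the number of matched triples, let alone to get $L_T\le \epsilon n+O(1/\epsilon^2)$; the vague ``block counting'' does not supply one. Indeed, controlling exactly this additive-multiplicative interaction is what forces the paper in Theorem~\ref{main2} to invoke cyclotomic number bounds, and Theorem~\ref{main1} is advertised as number-theory-free precisely because it \emph{avoids} the exponential construction altogether.

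The paper's construction is entirely different from yours. It takes $2^s\approx 1/\epsilon$ pairwise coprime integers $t_1<\dots<t_{2^s}$, sets $F_i=\F_{p^{t_i}}$, and chooses $S=\bigcup_{i=1}^{2^s} G_i$ where each $G_i$ is an $\F_p$-affine subspace of $F_i$ of size $|G_i|=n/2^s$. The key structural fact is $F_i\cap F_j=\F_p$ for $i\neq j$. The proof bounds $|\delta S\cap(S+\gamma)|$ by an exhaustive case analysis on which subfields contain $\delta$ and $\gamma$: if $\delta,\gamma\in F_i$ then the intersection is essentially trapped inside $\delta G_i\cap(G_i+\gamma)$, contributing at most $|G_i|=n/2^s\approx\epsilon n$, while cross terms $\delta G_i\cap(G_j+\gamma)$ with $i\neq j$ are bounded by $p$ via an elementary argument (Claim~\ref{lemmaneq}). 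An induction on $s$ assembles these pieces to give $|\delta S\cap(S+\gamma)|\le n/2^s+O(p\cdot 4^s)$, which yields the stated bound. Nothing resembling a Sidon set or exponential evaluation points appears.
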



\begin{restatable}{thm}{maintwo} \label{main2}
Let $q=p^s$ be a power of a prime $p.$ If $q-1$ contains a nontrivial divisor $f$ which satisfies
\begin{itemize}
\item[(i)] $f<\log_{\sqrt{14}}\left( p^{{\rm{ord}}_{f}(p)}\right)$, and
\item[(ii)] $f$ is proportional to $\log(q)$,
\end{itemize}
then there exists a $2$-dimensional Reed-Solomon code $\mC$ over $\F_q$ with length $n$ and insdel distance $d$ satisfying
$$n=\Theta\left( \sqrt{\log q} \right) \ \ \text{and} \ \ d=2n-4.$$
\end{restatable}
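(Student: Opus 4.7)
The plan is to exhibit $n$ evaluation points $\alpha_1, \ldots, \alpha_n \in \F_q$ such that the resulting $2$-dimensional Reed-Solomon code $\mC$ attains $d = 2n - 4$, matching the upper bound of Theorem~\ref{improvedbound}, and then verify the asymptotic $n = \Theta(\sqrt{\log q})$. First I would reduce the distance requirement to a combinatorial condition on the $\alpha_i$. A codeword of $\mC$ corresponds to a polynomial $ax + b$, and the insdel distance between the codewords of $ax+b$ and $a'x+b'$ equals $2n - 2\ell$, where $\ell$ is the length of the longest common subsequence (LCS) of the two evaluation vectors. A match at positions $(i, j)$ occurs precisely when $a\alpha_i + b = a'\alpha_j + b'$; case analysis on whether $a$ or $a'$ vanish shows that the only nontrivial situation is $a, a' \in \F_q^*$, and in that case, setting $t = a/a'$ and $c = (b' - b)/a'$, the matches correspond to solutions $(i, j) \in \{1, \ldots, n\}^2$ of the linear equation $t\alpha_i - \alpha_j = c$, with $(t, c)$ ranging over $(\F_q^* \times \F_q) \setminus \{(1, 0)\}$. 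Hence $d = 2n - 4$ is equivalent to: for every such $(t, c)$, the solution set $S_{t,c} = \{(i,j) : t\alpha_i - \alpha_j = c\}$ contains no strictly increasing chain of length $3$ in both coordinates. I would establish the stronger condition $|S_{t,c}| \leq 2$, equivalently, no three points of the grid $A \times A$ with $A = \{\alpha_1, \ldots, \alpha_n\}$ lie on an affine line $y = tx - c$ other than the diagonal $y = x$.

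Next I would construct the evaluation points using the multiplicative structure provided by $f$. Since $f \mid q - 1$, there is a unique subgroup $H \leq \F_q^*$ of order $f$, namely the $f$-th roots of unity, and $H$ lies inside the subfield $\F_{p^m}$ where $m = {\rm{ord}}_f(p)$. I would choose $\alpha_1, \ldots, \alpha_n$ as a carefully selected subset of $H$ of size $n = \Theta(\sqrt{f}) = \Theta(\sqrt{\log q})$, taking at most one representative from each of $n$ suitably chosen cyclotomic cosets, so that dilates and translates $tA - c$ meet $A$ in at most two elements.

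The technical core is the bound $|S_{t,c}| \leq 2$ for every $(t, c) \neq (1, 0)$. Here the cyclotomic number machinery of~\cite{do2019} enters: $|S_{t,c}| = |A \cap (tA - c)|$ is controlled by a character sum of order $f$ over $\F_q$, whose error term is governed by Gauss sums taking values in $\F_{p^m}$. The hypothesis $f < \log_{\sqrt{14}}(p^m)$, equivalently $\sqrt{14}^f < p^m$, is precisely the threshold beyond which the Weil-type estimate on these character sums forces the number of collinear triples in $A \times A$ off the diagonal to vanish; the explicit formulas in~\cite{do2019} for cyclotomic numbers of order $f$ then convert this sum bound into the required pointwise estimate.

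Combining the above, the LCS between any two distinct codewords has length at most $2$, so $d \geq 2n - 4$. Since $n^2 = \Theta(\log q) \ll q$ for large $q$, Theorem~\ref{improvedbound} applies and gives $d \leq 2n - 4$, yielding $d = 2n - 4$; the asymptotic $n = \Theta(\sqrt{\log q})$ then follows from the size of the construction and hypothesis (ii). I expect the main obstacle to be the cyclotomic step: converting the arithmetic hypothesis $\sqrt{14}^f < p^m$ into a uniform pointwise bound $|A \cap (tA - c)| \leq 2$ for all $(t, c) \neq (1, 0)$, which requires a careful choice of $A$ within $H$ and explicit evaluation of the relevant cyclotomic numbers.
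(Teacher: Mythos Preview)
Your reduction to the condition $|\delta A \cap (A+\gamma)| \leq 2$ for all $(\delta,\gamma) \neq (1,0)$ is exactly right, and placing $A$ inside the order-$f$ subgroup $H = \{g^{ie}: i \in \Z_f\}$ of $\F_q^*$ is the correct first move. But there is a structural gap in how you propose to verify the bound: you treat the cyclotomic machinery of \cite{do2019} as handling all $(\delta,\gamma)\neq(1,0)$ at once, and it does not.

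The cyclotomic input (Lemma~\ref{cyclo_refined}) only controls the \emph{translate} case $\gamma \neq 0$. Under hypothesis~(i) it gives $|(1+S_a)\cap S_b|\leq 2$ for all $a,b$, and after rescaling this becomes $|\delta H \cap (H+\gamma)| \leq 2$ for every $\gamma \neq 0$; this already holds for the full group $H$, so no ``careful choice of $A$'' is needed here and no evaluation of cyclotomic numbers depends on $A$. The pure-dilate case $\gamma=0$ is untouched by this: indeed $|\delta H \cap H| = f$ whenever $\delta \in H$, so cyclotomy alone cannot yield $|\delta A \cap A| \leq 2$.

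What is missing is a second, purely combinatorial ingredient. Writing $A=\{g^{a_1 e},\dots,g^{a_n e}\}$ with exponent set $T=\{a_1,\dots,a_n\}\subset \Z_f$, the condition $|\delta A \cap A| \leq 2$ for all $\delta\neq 1$ is exactly that every nonzero element of $\Z_f$ occurs at most twice in the multiset $T-T$. This Sidon-type constraint is what forces $n=\Theta(\sqrt{f})$ (via $n(n-1)\leq 2(f-1)$), and the paper realizes it by taking $T$ to be a Singer difference set in $\Z_{r^2+r+1}$ for a prime $r$ with $\sqrt{f}/2\le r\le\sqrt{f}-1$ (found by Bertrand's postulate), then embedding into $\Z_f$; see Lemma~\ref{singer} and Lemma~\ref{constr_main}. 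Your phrase ``one representative from each of $n$ suitably chosen cyclotomic cosets'' does not describe this: all of $A$ sits in the single coset $S_0=H$, and the selection is governed by the additive difference structure of $T$, not by cosets. As a minor point, the mechanism in \cite{do2019} is not a Weil-type character-sum estimate but a transfer of polynomial identities from $\F_q$ to $\mathbb{C}$ (Lemma~\ref{equiv}), valid precisely when $\sqrt{14}^{\,f} < p^{\mathrm{ord}_f(p)}$.
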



Note that the construction in Theorem~\ref{main1} provides a family of $2$-dimensional Reed-Solomon codes whose minimum insdel distance is asymptotically best possible, i.e. $d=2n(1+o(1))$. For this construction to work, we require $q$ to be super-polynomial in $n$. On the other hand, the construction in Theorem~\ref{main2} provides a family of $2$-dimensional Reed-Solomon codes whose minimum insdel distance is exact and meets the bound provided in Theorem~\ref{improvedbound}, i.e., $d=2n-4$. For this construction to work, we require $q-1$ to have a nontrivial divisor $f$ which is logarithmic on the characteristic $p$ of $\F_q$.

\subsection{Organization}
This paper is organized as follows. In Section~\ref{sec:prelim}, we introduce some definitions and basic results on insdel codes and Reed-Solomon codes.
Moreover, we discuss results on cyclotomic numbers and Singer difference sets which will be needed for our construction in Section \ref{sec:constr2}.
In Section \ref{sec:imp}, we prove an improved version of the Singleton bound on the insdel error-correcting capabilities of Reed-Solomon codes and provide a general framework for our constructions in Section \ref{sec:constr1} and Section \ref{sec:constr2}.
Lastly, Section \ref{sec:constr1} is used to prove Theorem \ref{main1} and Section \ref{sec:constr2} is used to prove Theorem \ref{main2}.

\section{Preliminaries}\label{sec:prelim}

\subsection{Insdel codes and Reed-Solomon codes}
Let $\F_q$ be a finite field with $q$ elements and $\F_q^n$ be the set of all vectors of length $n$ over $\F_q.$ For any positive real number $i,$ we denote by $[i]$ the set of positive integers $\{1,\cdots,\lfloor i\rfloor\}.$ 

\begin{defn}[Insdel distance]
The insdel distance $d(\mathbf{a},\mathbf{b})$ between two words $\mathbf{a}\in \F_q^{n_1}$ and $\mathbf{b}\in \F_q^{n_2}$ (not necessarily of the same length) is the minimum number of insertions and deletions which is needed to transform $\mathbf{a}$ into $\mathbf{b}.$ It can be verified that $d(\mathbf{a}, \mathbf{b})$ is a metric.
\end{defn}

\begin{defn}
An insdel code $\mathcal{C}$ of length $n$ is a subset of $\F_q^n.$ Its minimum insdel distance is defined as 
\begin{eqnarray*}
\displaystyle d(\mC)=\min_{\mathbf{c_1},\mathbf{c_2}\in \mathcal{C},\mathbf{c_1}\neq \mathbf{c_2}}\{d(\mathbf{c_1},\mathbf{c_2})\}.
\end{eqnarray*}
\end{defn}

\noindent Note that for two distinct codewords $\mathbf{c}_1,\mathbf{c}_2$ in an insdel code $\mathcal{C}$ of length $n$, the distance $d(\mathbf{c}_1,\mathbf{c}_2)$ is an even number between $2$ and $2n.$
Throughout this paper, given two vectors $\mathbf{a}$ and $\mathbf{b}$ and a code $\mathcal{C},$ unless specifically noted, we use $d(\mathbf{a},\mathbf{b})$ and $d(\mathcal{C})$ to denote the insdel distance between $\mathbf{a}$ and $\mathbf{b}$ and the minimum insdel distance of the code $\mathcal{C}$, respectively.

\medskip

It is well known that the insdel distance between two vectors can be calculated via their longest common subsequence. We state this result in the following lemma and include a proof for the convenience of the reader.

\begin{lemma} \label{ins}
If $\mathbf{a}\in \F_q^{n_1}$ and $\mathbf{b} \in \F_q^{n_2}$, where $n_1$ and $n_2$ are not necessarily the same, we have 
\begin{equation} \label{insd}
d(\mathbf{a},\mathbf{b})= n_1+n_2-2\ell,
\end{equation}
where $\ell$ denotes the length of a longest common subsequence of $\mathbf{a}$ and $\mathbf{b}$.
\end{lemma}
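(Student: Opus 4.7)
The plan is to establish the equality $d(\mathbf{a},\mathbf{b}) = n_1 + n_2 - 2\ell$ by proving the two matching inequalities separately. Both directions are classical, but the lower bound requires a small structural observation about sequences of insdel operations.

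For the upper bound $d(\mathbf{a},\mathbf{b}) \leq n_1 + n_2 - 2\ell$, I would fix a longest common subsequence $\mathbf{s}$ of $\mathbf{a}$ and $\mathbf{b}$, say of length $\ell$, together with an embedding of $\mathbf{s}$ as a subsequence of $\mathbf{a}$ and as a subsequence of $\mathbf{b}$. Starting from $\mathbf{a}$, delete the $n_1 - \ell$ coordinates of $\mathbf{a}$ that lie outside the chosen embedding; this transforms $\mathbf{a}$ into $\mathbf{s}$. Then insert, one at a time, the $n_2 - \ell$ coordinates of $\mathbf{b}$ that lie outside its embedding of $\mathbf{s}$, placing each in its correct position; this transforms $\mathbf{s}$ into $\mathbf{b}$. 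The total number of operations is $(n_1 - \ell) + (n_2 - \ell) = n_1 + n_2 - 2\ell$.

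For the lower bound $d(\mathbf{a},\mathbf{b}) \geq n_1 + n_2 - 2\ell$, I would take any sequence of $D$ deletions and $I$ insertions that transforms $\mathbf{a}$ into $\mathbf{b}$. The key observation is to track the \emph{provenance} of each symbol of the final string $\mathbf{b}$: each entry of $\mathbf{b}$ is either (i) an original symbol from $\mathbf{a}$ that was never deleted, or (ii) a symbol inserted at some step and never subsequently deleted. Let $m$ denote the number of type-(i) symbols. These $m$ symbols appear in $\mathbf{b}$ in the same order as in $\mathbf{a}$, since insdel operations never reorder surviving symbols; hence they form a common subsequence of $\mathbf{a}$ and $\mathbf{b}$, and therefore $m \leq \ell$. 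On the other hand, the number of deletions must satisfy $D \geq n_1 - m$ (to get rid of the non-surviving $\mathbf{a}$-symbols) and the number of insertions must satisfy $I \geq n_2 - m$ (to produce the type-(ii) symbols in $\mathbf{b}$). Adding, we get $D + I \geq n_1 + n_2 - 2m \geq n_1 + n_2 - 2\ell$.

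The main obstacle, though minor, is the precise bookkeeping for the lower bound: one needs to verify that the type-(i) symbols genuinely occur in the same relative order in both $\mathbf{a}$ and $\mathbf{b}$. This follows because a single deletion or insertion operation does not permute the surviving symbols; induction on the number of operations makes this rigorous. Once this is granted, the two inequalities combine to give \eqref{insd}.
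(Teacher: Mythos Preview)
Your proposal is correct and follows essentially the same two-inequality strategy as the paper. The only cosmetic difference is in the lower bound: the paper invokes commutativity of insertions and deletions to reorder all deletions first, obtaining an intermediate word $\mathbf{w}$ that is visibly a common subsequence of $\mathbf{a}$ and $\mathbf{b}$, whereas you track the provenance of symbols directly; both arguments extract the same common subsequence of length $m\leq \ell$ and conclude identically.
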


\begin{proof}
Let ${\mathbf{c}}$ be a longest common subsequence of $\mathbf{a}$ and $\mathbf{b}.$ We can transform $\mathbf{a}$ to $\mathbf{b}$ by first transforming $\mathbf{a}$ to ${\mathbf{c}}$ via $n_1-\ell$ deletions, then $\mathbf{b}$ to $\mathbf{c}$ via $n_2-\ell$ insertions. So $d(\mathbf{a}, \mathbf{b})\leq d(\mathbf{a}, \mathbf{c})+d(\mathbf{b},\mathbf{c})\leq n_1+n_2-2\ell $. 
So, to prove Equation (\ref{insd}), it suffices to show that $d(\mathbf{a},\mathbf{b})\geq n_1+n_2-2\ell$. 

Suppose that $d(\mathbf{a},\mathbf{b})=d_D+d_I$, that is, we have a way to transform $\mathbf{a}$ to $\mathbf{b}$ via $d_D$ deletions and $d_I$ insertions. Note that up to positional adjustments, insertion and deletion operations are commutable. Hence, we can assume that all deletions are done first before the insertions are conducted. Supposing $\mathbf{w}$ is the resulting subword after the $d_D$ deletions from $\mathbf{a},$ it is easy to see that ${\mathbf{w}}$ is a common subsequence of $\mathbf{a}$ and $\mathbf{b}$ and it has length $\ell'=n_1-d_D=n_2-d_I.$ By the maximality of $\ell,$ we have that $\ell'=n_1-d_D=n_2-d_I\leq \ell.$ Thus, $d(\mathbf{a},\mathbf{b})= d_I+d_D\geq n_1+n_2-2\ell.$ We finish the proof of Lemma \ref{ins}.
\end{proof}

An insdel code over $\F_q$ of length $n$, size $M$ and minimum insdel distance $d$ is called an $(n,M,d)_q$-insdel code. Moreover, we call a code an $[n,k,d]_q$-insdel code if it is a $k$-dimensional linear code over $\F_q$ of length $n$ and minimum insdel distance $d$. 

\medskip

The minimum insdel distance of an insdel code is one of its most important parameters, as it indicates the insdel error-correcting capability of the code.
For a fixed length $n$ and a fixed size $M,$ it is desirable for an $(n,M,d)_q$- insdel code to have $d$ as large as possible.
It is shown in \cite{stoc2017} that an $(n,M,d)_q$-insdel code $\mathcal{C}$ must obey the following version of the Singleton bound.

\begin{prop}[Singleton Bound~\cite{stoc2017}]
Let $\mathcal{C}\subseteq \F_q^n$ be an $(n,M,d)_q$-insdel code of length $n$ and minimum insdel distance $0\leq d\leq 2n.$ Then
\begin{equation*}
M\leq q^{n-d/2+1.}
\end{equation*}
As a consequence, if $\mC$ is an $[n,k,d]_q$-insdel code, then
\begin{equation} \label{singleton-insdel}
d\leq 2n-2k+2.
\end{equation}
\end{prop}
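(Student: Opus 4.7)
The plan is to reduce the bound to a clean pigeonhole argument by exploiting Lemma \ref{ins}, which translates insdel distance into length of a longest common subsequence. I would define a projection map $\pi: \mathcal{C}\to \F_q^{n-d/2+1}$ that sends each codeword $\bc=(c_1,\ldots,c_n)$ to its prefix $(c_1,\ldots,c_{n-d/2+1})$, and then argue that $\pi$ must be injective; once this is shown, the bound $M=|\mathcal{C}|=|\pi(\mathcal{C})|\leq q^{n-d/2+1}$ follows immediately.

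To prove injectivity, I would suppose for contradiction that two distinct codewords $\bc_1,\bc_2\in \mathcal{C}$ agree on their first $n-d/2+1$ coordinates. Then the common prefix of length $n-d/2+1$ is, in particular, a common subsequence of $\bc_1$ and $\bc_2$, so the longest common subsequence has length $\ell\geq n-d/2+1$. Applying Lemma \ref{ins} with $n_1=n_2=n$ would give
\begin{equation*}
d(\bc_1,\bc_2)=2n-2\ell\leq 2n-2(n-d/2+1)=d-2<d,
\end{equation*}
contradicting the assumption that $d$ is the minimum insdel distance of $\mathcal{C}$. (One must take a small implicit care that $n-d/2+1\geq 0$ so that a prefix of this length makes sense; the bound is vacuous in the edge case $d=2n$ where one may simply take the projection onto an empty coordinate set, giving $M\leq q$.) This establishes injectivity and hence $M\leq q^{n-d/2+1}$.

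For the linear-code consequence, if $\mathcal{C}$ is an $[n,k,d]_q$-insdel code then $M=q^k$, and substituting into the just-proved bound yields $q^k\leq q^{n-d/2+1}$, i.e.\ $d\leq 2n-2k+2$, which is Inequality \eqref{singleton-insdel}. I do not foresee a substantive obstacle here: the proof is essentially a translation of the classical Singleton-bound pigeonhole into the insdel setting through Lemma \ref{ins}, and the only subtle point is recognizing that agreement on a prefix of length $n-d/2+1$ already supplies a common subsequence long enough to violate the minimum distance.
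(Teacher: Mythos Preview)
The paper does not actually supply its own proof of this proposition: it is stated with a citation to \cite{stoc2017} and no argument is given. So there is nothing in the paper to compare your approach against.

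That said, your proof is correct and is the standard one. The prefix-projection map is well defined whenever $d\geq 2$, which holds automatically as soon as $|\mathcal{C}|\geq 2$ (two distinct words of length $n$ have insdel distance at least $2$); when $|\mathcal{C}|\leq 1$ the bound is trivial. Your use of Lemma~\ref{ins} to convert ``common prefix of length $n-d/2+1$'' into ``common subsequence of length $\geq n-d/2+1$'' and hence ``insdel distance $\leq d-2$'' is exactly the right reduction, and the deduction of \eqref{singleton-insdel} from $M=q^k$ is immediate. No gaps.
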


The minimum insdel distance $d$ of a code provides its insdel error-correcting capability. In the following we will define the error-correcting capability of a code. 



\begin{defn}\label{def:idcc}
A code $\mathcal{C}\subseteq \F_q^n$ is called a $t$-insdel error-correcting code if for any word $\mathbf{v}\in \bigcup_{m=n-t}^{n+t}\F_q^m,$ there is at most one codeword $\mathbf{c}\in\mathcal{C}$ such that $\mathbf{v}$ can be obtained from $\mathbf{c}$ by performing $t_I$ insertions and $t_D$ deletions to $\mathbf{c}$ for some non-negative integers $t_I$ and $t_D$ with $t_I+t_D\leq t.$ In other words, transmitting any codeword $\mathbf{c}\in\mathcal{C}$ through a channel which can incur up to $t$ insdel errors, we will always be able to recover the correct transmitted codeword. A $t$-deletion error-correcting code can also be similarly defined.
\end{defn}

The following remark provides the equivalence between $t$-deletion error-correcting code and $t$-insdel error-correcting code.
\begin{rmk}[\cite{Bours},\cite{mah}]
A code $\mathcal{C}\subseteq\F_q^n$ is a $t$-deletion error-correcting code if and only if it is a $t$-insdel error-correcting code. Due to this equivalence, we use deletion error-correcting capability and insdel error-correcting capability interchangeably in our discussion.
\end{rmk}

The relation between insdel error-correcting capability and minimum insdel distance of {an insdel} code is the same as that of a classical code. 
Assume that a word $\mathbf{r}$ has insdel distance at most $t$ to both codewords $\mathbf{c}_1$ and $\mathbf{c}_2$ of $\mathcal{C}.$ If $d(\mathcal{C})=d,$ the triangle's inequality implies that $d\leq d(\mathbf{c}_1,\mathbf{c}_2)\leq 2t.$ {So, to ensure uniqueness of the codeword which is within $t$ insdel operations from any given received word, we set $t$ to be an integer less than $\frac{d}{2}.$} This {discussion} is summarised in the following remark.

\begin{rmk}
A code $\mathcal{C}\subseteq \F_q^n$ of minimum insdel distance $d$ has insdel error-correcting capability up to $\left\lfloor \frac{d-1}{2}\right\rfloor.$
\end{rmk}

Next, we provide the definition of Reed-Solomon {codes}. 

\begin{defn}
Let $q$ be a prime power and let $n,k$ be positive integers such that $k\leq n\leq q$. Let $S=\{\alpha_1,\cdots, \alpha_n\}$ be a subset of $\F_q$ of size $n$. We define $\F_q^{< k}[x]$ to be the set of polynomials over $\F_q$ of degree at most $k-1.$  For any $f(x)\in \F_q[x]$, define $\mathbf{c}_{f,S}=(f(\alpha_1),\cdots, f(\alpha_n))\in \F_q^n$. The Reed-Solomon code $RS_{n,k,S}$ is defined to be the set of vectors $\mathbf{c}_{f,S}$ with $f(x)\in \F_q^{< k}[x],$ that is,
\begin{equation*}
RS_{n,k,S}= \{(f(\alpha_1),\cdots, f(\alpha_n)): f(x)\in\F_q^{<k}[x]\}.
\end{equation*}
\end{defn}

\noindent Note that under Hamming metric, the minimum Hamming distance of $RS_{n,k,S},$ denoted by $d_{\sf H}(RS_{n,k,S})$, always achieve{s} the Singleton bound, that is, $d_{\sf H}(RS_{n,k,S})= n-k+1$.
Howeover, this is not necessarily true for the case of insdel distance, i.e. the insdel distance $d$ of $RS_{n,k,S}$ may be strictly smaller than $2n-2k+2$.
Theorem \ref{improvedbound} confirms this assertion when $q$ is large enough compared to $n$. Prior to our result, McAven and Safavi-Naini \cite{MS2007} provided an evidence on this in the case $k=2$, $n\geq 3$ and $q$ is a prime.

\medskip

\subsection{Cyclotomic numbers and Singer difference sets}

In this subsection, we prove a refined version of \cite[Main Theorem 1]{do2019}  and state a result on Singer difference sets which will be  needed for our construction in Section \ref{sec:constr2}. First, we state some notations and definitions.

Let $q$ be a power of a prime $p$. Let $e$ and $f$ be nontrivial divisors of $q-1$ such that $q=ef+1$. By $\zeta_f$ we denote a primitive $f$th root of unity. By $\Z_f$ we denote the set of residues modulo $f$, i.e. $\Z_f=\{0,1,\dots,f-1\}$.
Let $g$ be a primitive element of the finite field $\F_q$. For each $a\in \Z$, write
$S_a=\{g^a,g^{a+e},\dots,g^{a+(f-1)e}\}$.
For $a,b\in \Z$, define $(a,b)$ as a cyclotomic number of order $e$ whose value is equal to the number of solutions to the following equations
$$1+x=y, \ x\in S_a, \ y\in S_b.$$
Equivalently, we have
\begin{eqnarray*}
(a,b)&=&|(1+S_a)\cap S_b| \\
&=& |\{(i,j)\in \Z_f\times \Z_f: 1+g^{a+ie}=g^{b+je}\}|.
\end{eqnarray*}
Note that $S_a=S_{a+e}$ for any integer $a$. Hence, we only need to consider the sets $S_a$ and cyclotomic numbers $(a,b)$ with $a,b\in \{0,1,\dots,e-1\}$.
By \cite[Main Theorem 1]{do2019}, we have $(a,b)\leq 3$ if $p>\left(\sqrt{14}\right)^{f/{\rm{ord}}_f(p)}$.
The idea for the proof of this result is to transform equations over $\F_q$ into equations over $\mathbb{C}$ and then study the later equations. The following version of \cite[Theorem 4.1]{do2019} is needed for our purpose.

\begin{lemma}\cite[Theorem 4.1]{do2019} \label{equiv}
Assume that the polynomial ${h}(x)=\sum_{i=0}^{f-1}a_ix^i \in \Z[x]$ has coefficients $a_i$ satisfying $\sum_{i=0}^{f-1} a_i^2\geq 3$. If
$$p^{{\rm{ord}}_{f}(p)}>\left(\sum_{i=0}^{f-1}a_i^2\right)^{f/2},$$
then 
$${h}(g^e)=0 \ \text{over} \ \F_q \Leftrightarrow {h}(\zeta_f)=0 \ \text{over} \ \mathbb{C}.$$
\end{lemma}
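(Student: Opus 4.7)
The plan is to prove the two directions separately. Throughout, write $B=\sum_{i=0}^{f-1}a_i^2$ and $m=\mathrm{ord}_f(p)$, and observe that $\alpha:=g^e$ is a primitive $f$-th root of unity in $\F_q$ while $p\nmid f$ (since $f\mid q-1$).

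The implication $h(\zeta_f)=0 \Rightarrow h(\alpha)=0$ is routine. Since $\Phi_f(x)$ is the monic minimal polynomial of $\zeta_f$ over $\mathbb{Q}$, by Gauss's lemma $\Phi_f$ divides $h$ in $\Z[x]$. Reducing modulo $p$ and using that $\alpha$ is a root of $\overline{\Phi_f}\in\F_p[x]$ (because $\alpha$ has multiplicative order $f$ and $p\nmid f$), we conclude $h(\alpha)=0$ in $\F_q$.

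For the converse, suppose $h(\alpha)=0$ in $\F_q$ and argue by contradiction, assuming $h(\zeta_f)\neq 0$. I will sandwich the rational integer $N:=N_{\mathbb{Q}(\zeta_f)/\mathbb{Q}}(h(\zeta_f))=\prod_{k\in(\Z/f\Z)^*}h(\zeta_f^k)$. For the lower bound, the primes of $\Z[\zeta_f]$ above $p$ correspond to Frobenius orbits of primitive $f$-th roots of unity in $\overline{\F_p}$, each with residue field of size $p^m$. Picking the prime $\mathfrak{p}$ under which $\zeta_f$ reduces to $\alpha$, we get $h(\zeta_f)\equiv h(\alpha)\equiv 0\pmod{\mathfrak{p}}$, so $\mathfrak{p}\mid(h(\zeta_f))$ and $p^m=N(\mathfrak{p})$ divides $|N|$, whence $|N|\geq p^m$. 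For the upper bound, Parseval's identity gives $\sum_{k=0}^{f-1}|h(\zeta_f^k)|^2=fB$; dropping the nonnegative terms indexed by $k$ with $\gcd(k,f)>1$ and applying AM--GM to the $\varphi(f)$ remaining terms yields $|N|^2\leq (fB/\varphi(f))^{\varphi(f)}$, hence $|N|\leq (fB/\varphi(f))^{\varphi(f)/2}\leq B^{f/2}$ (the last inequality using $B\geq 3$). Combined, $p^m\leq B^{f/2}$, contradicting the hypothesis.

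The main obstacle is the lower bound $p^m\mid N$, rather than the trivial $p\mid N$ that follows from Bézout in $\Z[x]$ alone; extracting the full factor $p^m$ requires the local structure of $\Z[\zeta_f]$ at $p$. The cleanest route is via the $p$-adic completion: since $p\nmid f$, Hensel's lemma factors $\Phi_f$ over $\Z_p$ into $\varphi(f)/m$ unramified factors of degree $m$, so any nonzero element of the chosen prime $\mathfrak{p}$ above $p$ contributes local norm divisible by $p^m$ to $N$. A secondary technicality is the step $(fB/\varphi(f))^{\varphi(f)/2}\leq B^{f/2}$ for $B\geq 3$, which after taking logarithms reduces to $\log B\geq \frac{r\log(1/r)}{1-r}$ with $r=\varphi(f)/f\in(0,1)$; the right-hand side is monotone in $r$ with supremum $1$, so $B\geq 3>e$ is precisely what is needed to make the two bounds on $|N|$ incompatible.
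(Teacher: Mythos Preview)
The paper does not include its own proof of this lemma; it is quoted verbatim as \cite[Theorem 4.1]{do2019} and used as a black box. So there is nothing in the present paper to compare your argument against.

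Your proof is correct. The forward direction is the standard Gauss's-lemma argument. For the reverse direction, your norm sandwich is the natural approach and all steps check out: since $p\nmid f$, the prime $p$ is unramified in $\Z[\zeta_f]$ with residue degree $m=\mathrm{ord}_f(p)$, so any prime $\mathfrak{p}$ above $p$ has $N(\mathfrak{p})=p^m$; choosing $\mathfrak{p}$ so that $\zeta_f\bmod\mathfrak{p}$ lies in the Frobenius orbit of $\alpha$ gives $\mathfrak{p}\mid h(\zeta_f)$ and hence $p^m\mid |N|$. The Parseval identity $\sum_{k=0}^{f-1}|h(\zeta_f^k)|^2=fB$ and AM--GM over the $\varphi(f)$ primitive terms give $|N|\le(fB/\varphi(f))^{\varphi(f)/2}$. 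Your final reduction is also correct: writing $r=\varphi(f)/f\in(0,1)$, the inequality $(fB/\varphi(f))^{\varphi(f)/2}\le B^{f/2}$ is equivalent to $\log B\ge \frac{-r\log r}{1-r}$, and the right side is increasing in $r$ with supremum $1$ (since $r-1-\log r>0$ on $(0,1)$), so $B\ge 3>e$ suffices and the two bounds on $|N|$ contradict the hypothesis $p^m>B^{f/2}$.
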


\medskip

The study of cyclotomic numbers $(a,b)$ is divided into five cases: $(0,0)$, $(0,a)$, $(a,0)$, $(a,a)$ and $(a,b)$, where $a$ and $b$ are distinct elements in the set $\{1,\dots,e-1\}$. The following lemma summarizes the results of these cases, see \cite[Theorems 5.1, 5.3, 5.4, 5.5, 5.6]{do2019}.

\begin{lemma}\cite{do2019} \label{cyclo_known}
Let $a$ and $b$ be distinct integers in the set $\{1,\dots, {e}-1\}$. If $f<\log_{\sqrt{14}}\left(p^{{\rm{ord}}_f(p)}\right)$, then we have the following
\begin{eqnarray*}
(0,0) &\leq & \begin{cases} 3 \ \mathrm{if} \ f\equiv 0\pmod{6} \ \mathrm{and} \ 2\in S_0,\\ 2 \ \mathrm{otherwise}. \end{cases} \\
(0,a) &\leq & \begin{cases} 3 \ \mathrm{if} \ 2\in S_a, \\ 2 \ \mathrm{if} \ 2\not\in S_a. \end{cases} \\
(a,0) &\leq & \begin{cases} 3 \ \mathrm{if}\ 2\in S_{a} \ \mathrm{and} \ 2\mid f, \\ 2 \ \mathrm{otherwise}.\end{cases}\\
(a,a) &\leq & \begin{cases} 3 \ \mathrm{if} \ 2\in S_{-a} \ \mathrm{and} \ 2\mid f, \\ 2 \ \mathrm{otherwise}.\end{cases}\\
(a,b) &\leq & 2.
\end{eqnarray*}
\end{lemma}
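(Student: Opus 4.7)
The plan is to prove each bound by contradiction, using Lemma~\ref{equiv} to transfer an excess of solutions in $\F_q$ to a short vanishing polynomial relation among $f$th roots of unity in $\mathbb{C}$. Note first that the hypothesis $f < \log_{\sqrt{14}}(p^{{\rm ord}_f(p)})$ is equivalent to $14^{f/2} < p^{{\rm ord}_f(p)}$, so Lemma~\ref{equiv} is applicable as soon as the polynomial $h \in \Z[x]$ I construct has coefficient vector of squared $\ell_2$-norm at most $14$ (and at least $3$, as required by the hypothesis of that lemma).

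For each of the five cases I would begin by assuming one more solution pair to $1 + g^{a+ie} = g^{b+je}$ than the stated bound allows, giving distinct pairs $(i_k,j_k)$. Taking two such equations, solving one for $g^b$, substituting into the other and clearing denominators produces a four-term relation
\[
g^{u_1} - g^{u_2} + g^{u_3} - g^{u_4} = 0
\]
with all exponents $u_\ell$ divisible by $e$ modulo $q-1$. Taking three solutions and eliminating $g^{a-b}$ between two ratio expressions instead produces a six-term $\pm 1$ relation with the same divisibility property. In either case, setting $y = g^e$ gives $h(y) = 0$ over $\F_q$ for some $h \in \Z[x]$ of degree less than $f$ whose squared coefficient sum is at most $6 \leq 14$. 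Lemma~\ref{equiv} then promotes this to $h(\zeta_f) = 0$ in $\mathbb{C}$.

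The heart of the argument is to rule out such short vanishing sums of $f$th roots of unity. Since $\zeta_f$ has minimal polynomial $\Phi_f$ of degree $\varphi(f)$, any integer polynomial vanishing at $\zeta_f$ must be divisible by $\Phi_f$; in particular, a four- or six-term $\pm 1$-combination summing to zero is highly constrained and, generically, can only arise when two exponents coincide, contradicting the distinctness of the chosen solutions. The ``$+1$'' exceptions appearing in the statement of Lemma~\ref{cyclo_known} correspond exactly to the genuine short identities among low-order roots of unity: the relation $1 + (-1) = 0$, available whenever $-1 \in S_0$ (i.e.\ $2 \mid f$), and the sixth-root identity $1 - \zeta_6 + \zeta_6^2 = 0$, available whenever $6 \mid f$. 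Together with the observation that $1+1 = 2$ is itself a solution whenever $2$ lies in the relevant class $S_\alpha$, these identities produce one unavoidable extra solution in the flagged subcases and thereby raise the bound from $2$ to $3$.

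The main obstacle is the case-by-case bookkeeping. For each of the five cases $(0,0),(0,a),(a,0),(a,a),(a,b)$ one must (i) enumerate which trivial solutions are forced by $2 \in S_0$, $2 \in S_a$, or $2 \in S_{-a}$ together with the parity of $f$; (ii) verify that any \emph{additional} solutions beyond these trivial ones yield a polynomial $h$ with $\sum_i a_i^2 \in [3,14]$; and (iii) argue that the resulting identity $h(\zeta_f) = 0$ cannot hold except via the explicit second- and sixth-root-of-unity identities already absorbed into the stated bound. Step (iii) exploits the symmetry $S_{-a} = -S_a$ when $2 \mid f$, which is why the $(a,0)$ and $(a,a)$ subcases share the parity hypothesis. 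The generic $(a,b)$ case with $a \neq b$ is the cleanest, because neither of the exceptional short identities can be arranged when $a$ and $b$ lie in distinct cosets of $\langle g^e\rangle$, so the bound remains $2$ with no extra subcase.
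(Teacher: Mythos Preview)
The paper does not prove Lemma~\ref{cyclo_known}; it simply quotes it as \cite[Theorems~5.1, 5.3, 5.4, 5.5, 5.6]{do2019}. So there is no in-paper argument to compare against. What can be compared is the paper's proof of the \emph{refinement}, Lemma~\ref{cyclo_refined}, which uses exactly the mechanism you describe: build a short integer relation in $g^e$, invoke Lemma~\ref{equiv} to lift it to $\zeta_f$, and then argue that no such relation among $f$th roots of unity exists unless certain exponents collapse. Your outline is therefore the right strategy and matches the method of the cited source.

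That said, what you have written is a plan rather than a proof, and the parts you label ``bookkeeping'' are in fact where all the content lies. Two concrete points deserve care. First, your blanket claim that the resulting polynomial has $\sum_i a_i^2 \le 6$ is not automatic: when two of your six exponents coincide modulo $f$ the corresponding coefficients add, and the squared norm can jump (e.g.\ to $8$ or more). You must either verify case by case that collisions keep $\sum_i a_i^2 \le 14$, or show that any collision already forces two of your solution pairs to agree. Second, step~(iii)---ruling out $h(\zeta_f)=0$ for a four- or six-term $\pm 1$ combination---is not just ``generically exponents coincide''; it requires classifying the vanishing sums $\sum \epsilon_k \zeta_f^{u_k}=0$ with at most six terms, and this classification is exactly where the exceptional conditions $6\mid f$, $2\mid f$, and $2\in S_\alpha$ enter. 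You have correctly identified \emph{which} identities are responsible for the ``$+1$'' subcases, but you have not shown they are the \emph{only} ones, and that is the substance of each of Theorems~5.1--5.6 in \cite{do2019}. Until those five case analyses are written out, the argument is incomplete.
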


\medskip

We will prove a result stronger than \cite[Main Theorem 1]{do2019} as follows. 

\begin{lemma} \label{cyclo_refined}
Let $a$ and $b$ be any two elements in $\{0,1,\dots,e-1\}$. If $f<\log_{\sqrt{14}}\left(p^{{\rm{ord}}_{f}(p)}\right)$, then 
$$(a,b)\leq 2.$$
\end{lemma}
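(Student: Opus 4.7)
The plan is to sharpen Lemma \ref{cyclo_known} by eliminating the exceptional configurations in which the weaker bound $(a,b) \leq 3$ appears. Inspecting the five cases of Lemma \ref{cyclo_known}, we already have $(a,b) \leq 2$ except in the following four situations:
\begin{enumerate}
\item[(i)] $(a,b) = (0,0)$ with $f \equiv 0 \pmod{6}$ and $2 \in S_0$;
\item[(ii)] $(a,b) = (0,a)$ with $a \in \{1,\dots,e-1\}$ and $2 \in S_a$;
\item[(iii)] $(a,b) = (a,0)$ with $a \in \{1,\dots,e-1\}$, $2 \in S_a$, and $2 \mid f$;
\item[(iv)] $(a,b) = (a,a)$ with $a \in \{1,\dots,e-1\}$, $2 \in S_{-a}$, and $2 \mid f$.
\end{enumerate}
I would handle each case by contradiction, so the bulk of the proof is a case-by-case refinement that uses precisely the extra ``$2 \in S_c$'' hypothesis that forced the $3$ in Lemma \ref{cyclo_known}.

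In each of the four cases, assume $(a,b) = 3$. Then there exist three distinct pairs $(i_k,j_k) \in \mathbb{Z}_f \times \mathbb{Z}_f$ with $1 + g^{a+i_k e} = g^{b+j_k e}$ in $\mathbb{F}_q$ for $k=1,2,3$. Taking any two of these relations and subtracting yields an identity of the form $g^{a+i_k e} - g^{a+i_\ell e} = g^{b+j_k e} - g^{b+j_\ell e}$. The auxiliary relation $2 = g^{c+me}$ (for the appropriate choice of $c$ depending on the case) can then be multiplied through to clear the $g^a$ and $g^b$ factors, producing a polynomial identity $h(g^e) = 0$ over $\mathbb{F}_q$ with $h(x) \in \mathbb{Z}[x]$ of degree less than $f$ and coefficient norm $\sum_i a_i^2 \leq 14$. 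Since the hypothesis $f < \log_{\sqrt{14}}(p^{\mathrm{ord}_f(p)})$ is exactly $p^{\mathrm{ord}_f(p)} > 14^{f/2} \geq (\sum_i a_i^2)^{f/2}$, Lemma \ref{equiv} upgrades this to $h(\zeta_f) = 0$ over $\mathbb{C}$.

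The contradiction then comes from a direct analysis of vanishing relations among $f$-th roots of unity. The key observation is that $h(\zeta_f) = 0$ means the cyclotomic polynomial $\Phi_f(x)$ divides $h(x)$ in $\mathbb{Z}[x]$. One shows, using the constraint $\sum_i a_i^2 \leq 14$ on the coefficients and the fact that $\Phi_f(x)$ has degree $\varphi(f)$, that the only way this can occur is if the three pairs $(i_k,j_k)$ are not in fact distinct, or if the exponent $m$ representing $2$ collides with one of the $i_k$'s in a way that reduces the count of solutions. Either alternative contradicts $(a,b) = 3$. In the cases (iii), (iv) where $f$ is even, the identity $\zeta_f^{f/2} = -1$ simplifies $h(\zeta_f)$ dramatically and is the engine of the contradiction; in case (i), the further divisibility $6 \mid f$ allows one to exploit $\zeta_f^{f/3}$ and $\zeta_f^{f/6}$ as well.

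The main obstacle is the explicit bookkeeping required to produce, in each case, a polynomial $h$ whose coefficient norm stays within the budget $\sum a_i^2 \leq 14$ dictated by Lemma \ref{equiv}, while simultaneously guaranteeing that $h(\zeta_f) \neq 0$ over $\mathbb{C}$ whenever the three solution pairs are genuinely distinct. This amounts to a careful combinatorial pigeonholing over the possible coincidences among the exponents $a + i_k e$, $b + j_k e$, and $c + me$ modulo $f$, and is essentially the same kind of bounded-height vanishing-sums argument that drives \cite[Main Theorem 1]{do2019}, only pushed one step further to rule out the boundary value $3$.
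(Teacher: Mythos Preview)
Your strategic outline matches the paper's: isolate the four exceptional configurations from Lemma~\ref{cyclo_known}, exploit the hypothesis $2\in S_c$ to manufacture a low-height polynomial vanishing at $g^e$, invoke Lemma~\ref{equiv}, and derive a contradiction over $\mathbb{C}$. However, the execution you sketch diverges from the paper's and is unnecessarily heavier.

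The main difference is in how the polynomial $h$ is produced. You propose to \emph{subtract two} of the three relations $1+g^{a+i_ke}=g^{b+j_ke}$ and then multiply through by $2=g^{c+me}$, obtaining a four-term identity of norm $\le 10$ whose complex analogue you then analyse via $\Phi_f\mid h$ and special values $\zeta_f^{f/2},\zeta_f^{f/3},\zeta_f^{f/6}$. The paper instead takes a \emph{single} relation $1+g^{a+ie}=g^{b+je}$ and substitutes the expression for $2$ directly (e.g.\ in case $(0,0)$, from $1+g^{ie}=g^{je}$ and $2=g^{te}$ one gets $2-g^{(t-j)e}-g^{(t-j+i)e}=0$), yielding a three-term polynomial of norm exactly $6$. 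After Lemma~\ref{equiv}, the resulting equation over $\mathbb{C}$ has the shape $2-\zeta_f^{u}-\zeta_f^{v}=0$ (cases $(0,0)$ and $(0,a)$) or $2+\zeta_f^{u}-\zeta_f^{v}=0$ (case $(a,0)$), and the contradiction is a one-line real-part computation: $\cos(2\pi u/f)+\cos(2\pi v/f)=2$ forces $u=v=0$, respectively $2+\cos(2\pi u/f)-\cos(2\pi v/f)=0$ pins down $u,v$ uniquely. No appeal to $6\mid f$, to $\zeta_f^{f/6}$, or to degree bounds on $\Phi_f$ is needed. For case $(a,a)$ the paper does not repeat the argument but observes the bijection $(i,j)\mapsto(-i,j-i)$ giving $(a,a)=(-a,0)$, reducing to the previous case.

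Your version could likely be pushed through, but the step ``subtract two relations, then multiply by $2$'' does not obviously produce a contradiction from a single pair of solutions (the four-term root-of-unity equation $\alpha-\beta=2(\gamma-\delta)$ has nontrivial solutions), so you would still need to combine information from all three pairs, and the endgame you describe remains vague. The paper's route sidesteps this: it uses the extra solution only to guarantee that some $(i,j)$ avoids the ``trivial'' pair coming from $1+1=2$, after which one relation already suffices.
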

\begin{proof}
By Lemma \ref{cyclo_known}, it suffices to prove that $(0,0)\leq 2$, $(0,a)\leq 2$, $(a,0)\leq 2$ and $(a,a)\leq 2$, where $a \in \{1,\dots,{e}-1\}$. 

First, assume that $(0,0)\geq 3$. By Lemma \ref{cyclo_known}, we obtain $(0,0)=3$, $f\equiv 0\pmod{6}$ and $2\in S_0$. Let $t$ be the unique integer in the set $\{0,\dots, f-1\}$ such that $2=g^{te}\in S_0$. Note that 
$$(0,0) = |\{(i,j)\in \Z_f\times \Z_f: 1+g^{ie}=g^{je}\}|.$$
As $(0,0)=3$, there exist $i,j\in \Z_f$ such that $i\neq 0$, $j\neq t$ and $1+g^{ie}=g^{je}$. As $g^{te}=2$, we obtain $1+g^{ie}=2g^{(j-t)e}$, which implies
\begin{equation}\label{overFq}
2-g^{(t-j)e}-g^{(t-j+i)e}=0.
\end{equation}
Note that the numbers $t-j$ and $t-j+i$ are calculated modulo $f$, as $g^{ef}=1$. Moreover, the numbers $0$, $t-j$ and $t-j+i$ are pairwise different by our choice of $i$ and $j$. Let ${h}(x)=\sum_{i=0}^{f-1}a_ix^i \in \Z[x]$ be a polynomial which satisfies ${h}(g^e)=2-g^{(t-j)e}-g^{(t-j+i)e}$. The polynomial ${h}(x)$ has one coefficient equal to $2$, two coefficients equal to $-1$ and all remaining coefficients equal to $0$. Thus $\sum_{i=0}^{f-1}a_i^2=6$. The condition $f<\log_{\sqrt{14}}\left( p^{{\rm{ord}}_f(p)}\right)$ implies $p^{{\rm{ord}}_f(p)}>\left(\sqrt{14}\right)^f> \left(\sum_{i=0}^{f-1}a_i^2\right)^{f/2}$. As ${h}(g^e)=0$ over $\F_q$ by {Equation} (\ref{overFq}), we obtain, by Lemma \ref{equiv}, 
$${h}(\zeta_f)=2-\zeta_f^{t-j}-\zeta_f^{t-j+i}=0.$$
Hence $\cos(2\pi (t-j)/f)+\cos(2\pi(t-j+i)/f)=2$, which happens only when $t-j=t-j+i=0$, contradicting the choices of $i$ and $j$. Thus $(0,0)\leq 2$.

\medskip

The proof of $(0,a)\leq 2$ is similar to that of $(0,0)\leq 2$. Assume that $(0,a)\geq 3$. By Lemma \ref{cyclo_known}, we obtain $2\in S_a$, that is, there exists an integer $t\in \{0,\dots, f-1\}$ such that $2=g^{a+te}\in S_a$.
As $(0,a)\geq 3$, there exist $i,j\in \Z_f$ such that $i\neq 0$, $j\neq t$ and $1+g^{ie}=g^{a+je}$. As $g^{a+te}=2$, we obtain $1+g^{ie}=2g^{(j-t)e}$, which implies
$$2-g^{(t-j)e}-g^{(t-j+i)e}=0.$$
The rest of the proof follows exactly the same as in the previous case.

\medskip

Next, assume that $(a,0)\geq 3$. By Lemma \ref{cyclo_known}, we have $(a,0)=3$, $2\in S_a$ and $f\equiv 0\pmod{2}$. Fix $t\in \{0,1,\dots, f-1\}$ such that $g^{a+te}=2$. In this case, we prove that $(a,0)=1$, which is a contradiction and confirms our claim that $(a,0)\leq 2$. Let $i,j\in \Z_f$ such that $1+g^{a+ie}=g^{je}$. Hence $1+2g^{(i-t)e}=g^{je}$, which implies $2+g^{(t-i)e}-g^{(t-i+j)e}=0$. By similar reasoning as in the previous case, we obtain
$$2+\zeta_f^{t-i}-\zeta_f^{t-i+j}=0,$$
which implies $2+\cos\left(2\pi (t-i)/f\right)-\cos (2\pi (t-i+j)/f)=0$.
Hence $i=t+f/2$ and $j=f/2$. Therefore, the only solution to the equation $1+g^{a+xe}=g^{ye}$, $x,y\in \Z_f$, is $x=t+f/2$ and $y=f/2$, contradicting the assumption $(a,0)\geq 3$. We obtain $(a,0)\leq 2$.

\medskip

Lastly, we prove $(a,a)\leq 2$. For each pair $(i,j)\in \Z_f\times\Z_f$ with $1+g^{a+ie}=g^{a+je}$, we have $1+g^{-a-ie}=g^{(j-i)e}$. Hence, each solution $(i,j)\in \Z_f\times \Z_f$ to the equation $1+g^{a+xe}=g^{a+ye}$, $x,y\in\Z_f$, induces a solution $(-i,j-i)$ to the equation $1+g^{-a+xe}=g^{ye}$. We obtain $(a,a)=(-a,0)$. The inequality $(a,a)\leq 2$ follows from the case of $(-a,0)$.
\end{proof}

\medskip

The last result in this {sub}section is by Singer \cite{sin1938}. 
It will serve as one of the main components in our construction in Section~\ref{sec:constr2}.

\begin{lemma}[{\cite{sin1938} or \cite[Theorem 1.2.10]{sch2002}}] \label{singer}
Let $r$ be a prime power and let $d\geq 3$ be an integer. Let $\rm{Tr}_{\F_{r^d}/\F_r}$ denote the trace function from $\F_{r^d}$ to $\F_r$. Then the set
$$D=\{x\F_r^*: x\in \F_{r^d}^*, \rm{Tr}_{\F_{r^d}/\F_r}(x)=0\}$$
is a subset of $G=\F_{r^d}^*/\F_r^*$ with the properties that $|D|=(r^{d-1}-1)/(r-1)$ and each non-identity element of $G$ appears exactly once in the multi-set $DD^{(-1)}=\{xy^{-1}: x,y\in D\}.$
\end{lemma}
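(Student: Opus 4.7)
The plan is to verify three things in order: well-definedness of $D$ as a subset of $G=\F_{r^d}^*/\F_r^*$, the cardinality $|D|=(r^{d-1}-1)/(r-1)$, and the multiplicity of each non-identity element of $G$ in $DD^{(-1)}$. All three rest on the $\F_r$-linearity and the non-degeneracy of the trace form $(a,b)\mapsto \rm{Tr}_{\F_{r^d}/\F_r}(ab)$ on $\F_{r^d}$.

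Well-definedness is immediate: for any $\lambda\in \F_r^*$ one has $\rm{Tr}_{\F_{r^d}/\F_r}(\lambda x)=\lambda\,\rm{Tr}_{\F_{r^d}/\F_r}(x)$, so the condition $\rm{Tr}(x)=0$ depends only on the coset $x\F_r^*$, and $D$ is a genuine subset of $G$. For the cardinality, $\rm{Tr}_{\F_{r^d}/\F_r}$ is a surjective $\F_r$-linear functional on $\F_{r^d}$, so its kernel has $\F_r$-dimension $d-1$, hence contains $r^{d-1}-1$ nonzero elements; partitioning these into $\F_r^*$-cosets of size $r-1$ yields $|D|=(r^{d-1}-1)/(r-1)$.

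For the difference property, fix a non-identity $g\in G$ and a representative $\alpha\in \F_{r^d}^*\setminus \F_r^*$. A pair $(x\F_r^*,\,y\F_r^*)\in D\times D$ contributes $g$ to $DD^{(-1)}$ precisely when $x=\lambda\alpha y$ for some $\lambda\in \F_r^*$, in which case the two trace-zero conditions collapse (using $\lambda\neq 0$) to $\rm{Tr}(y)=0$ and $\rm{Tr}(\alpha y)=0$. Since $\alpha\notin \F_r$, the elements $1$ and $\alpha$ are $\F_r$-linearly independent in $\F_{r^d}$, and the non-degeneracy of the trace pairing then forces the two $\F_r$-linear functionals $y\mapsto \rm{Tr}(y)$ and $y\mapsto \rm{Tr}(\alpha y)$ to be $\F_r$-linearly independent. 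Their common kernel therefore has $\F_r$-dimension exactly $d-2$, and contains $r^{d-2}-1$ nonzero solutions, which form $(r^{d-2}-1)/(r-1)$ cosets $y\F_r^*$; each such coset uniquely determines its partner $x\F_r^*=\alpha y\F_r^*$. Specializing to $d=3$, this count is $(r-1)/(r-1)=1$, giving the stated conclusion (and more generally the count is the standard Singer parameter $\lambda=(r^{d-2}-1)/(r-1)$).

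The one delicate step is the linear-independence of the two trace functionals: the obstacle is to translate the statement ``$\alpha \notin \F_r$'' about an element of $\F_{r^d}$ into the statement ``$y\mapsto \rm{Tr}(\alpha y)$ is not a scalar multiple of $y\mapsto \rm{Tr}(y)$'' about functionals on $\F_{r^d}$. This is exactly where non-degeneracy of the trace form is used; everything else in the proof is a routine coset count.
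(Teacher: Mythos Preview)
The paper does not give its own proof of this lemma; it is quoted from Singer's original paper and Schmidt's monograph, and in the body of the paper it is only ever invoked with $d=3$. So there is nothing to compare your argument against except the classical proof, and your argument is essentially that classical proof: interpret $D$ as the projectivization of the trace-zero hyperplane, and count pairs with prescribed ratio by intersecting two hyperplanes $\mathrm{Tr}(y)=0$ and $\mathrm{Tr}(\alpha y)=0$, using non-degeneracy of the trace form to see that these hyperplanes are distinct when $\alpha\notin\F_r$.

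Your proof is correct. You also caught something the paper glosses over: as stated for arbitrary $d\geq 3$, the claim that each non-identity element appears \emph{exactly once} is false; the correct multiplicity is the Singer parameter $\lambda=(r^{d-2}-1)/(r-1)$, which equals $1$ only for $d=3$. Since the paper applies the lemma solely with $d=3$, this does not affect any downstream result, but your parenthetical remark giving the general count is a genuine correction to the statement.
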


\noindent We remark that the set $D$ defined in Lemma \ref{singer} is called a Singer difference set.

\bigskip

\section{An improved upper bound and framework for our constructions} \label{sec:imp}

In this section, we prove Theorem \ref{improvedbound} and provide a general framework for our constructions in Section \ref{sec:constr1} and Section \ref{sec:constr2}.

\subsection{An improved upper bound {on minimum insdel distance}}
First, we recall Theorem \ref{improvedbound}.


\improvedbound*


\begin{proof}
Let $\alpha_1,\dots,\alpha_n\in \F_q$ be $n$ distinct elements which define the code $\mC$, that is,
$$\mC=\{\left(f(\alpha_1),\dots,f(\alpha_n)\right): f\in \F_q[X]^{<k}\}.$$
Let $\mC'$ be the subcode of $\mC$ such that each codeword in $\mC'$ has pairwise distinct entries. Hence, each codeword in $\mC'$ has the form
$\left(\sum_{i=0}^{k-1}c_i\alpha_1^i,\dots, \sum_{i=0}^{k-1}c_i\alpha_n^i\right)$ such that
\begin{itemize}
\item[(i)] $c_i \in \F_q$ for $0\leq i\leq k-1$,
\item[(ii)] $c_1\not\in \Big\{-\sum_{i=2}^{k-1}c_i\frac{\alpha_j^i-\alpha_t^i}{\alpha_j-\alpha_t}: \ \alpha_j\neq \alpha_t\Big\}$ if $k\geq 3$, and $c_1\neq 0$ if $k=2$.
\end{itemize}
The second condition implies that there are at least $q-\frac{n(n-1)}{2}$ choices for $c_1$ for any given choice of $c_0,c_2,\cdots,c_{k-1}.$
Thus,
\begin{equation} \label{size}
|\mC'| \geq q^{k-1}\left(q-\frac{n(n-1)}{2}\right)> q^{k-1}\left( q-\frac{n^2}{2}\right).
\end{equation}
Put $\ell=\frac{d}{2}-1.$ Any two subwords of length $n-\ell$ of a codeword $\mathbf{u}\in \mC'$ are distinct, as $\mathbf{u}$ has all positions distinct.
Moreover, any two subwords of length $n-\ell$ of two different codewords $\mathbf{u},\mathbf{v}\in \mC'$ are also distinct because otherwise, we would have by Equation \eqref{insd}, $d(\mathbf{u},\mathbf{v})\leq 2n-2(n-\ell)=2\ell=d-2<d$, contradiction.
Thus, we can form more than $q^{k-1}(q-n^2/2){n\choose n-\ell}$ ordered words of length $n-\ell$, each having all positions distinct, from the code $\mC'$.
As the number of ordered words (over $\F_q$) of length $n-\ell$ with all positions distinct is $q(q-1)\cdots(q-(n-l)+1)$, we obtain
\begin{equation}\label{inequal}
q^{k-1}(q-n^2/2){n\choose n-\ell} < q(q-1)\dots (q-(n-\ell)+1)\leq q^{n-\ell}.
\end{equation}
Note that $\ell=d/2-1\leq n-1$, as $d\leq 2n$. Furthermore, we can assume $\ell\geq 1$. If $\ell=0$, then $d=2\leq 2n-2k$. When $\ell\geq 1,$ we have $n\leq {n\choose n-\ell}$ and $q-n^2/2\geq q/2$. By \eqref{inequal}, we obtain
$$\frac{nq^k}{2}<q^{n-\ell},$$
which implies $n-\ell \geq {k+1}$, that is, $\ell=d/2-1\leq n-k-1$.
\end{proof}

\subsection{Framework for our constructions}
Our goal in the remainder of this paper is to construct $2$-dimensional insdel Reed-Solomon codes of length $n$ and minimum insdel distance {approaching} $2n-2$, the distance provided by the Singleton bound.
We show that this problem can be reduced to finding a subset $S$ of $\F_q$ which has small intersections with all its nontrivial linear transformations,
that is $|\delta S\cap (S+\gamma)|$ is small for any $(\delta, \gamma) \in \F_q\times \F_q \setminus \{(1,0)\}.$

\medskip

Consider a Reed-Solomon code $\mC$ of dimension $2$ under insdel metric as 
$$\mC=\{(a\alpha_1+b, a\alpha_2+b,\cdots, a\alpha_n+b): a,b\in\F_q\},$$ 
where $S=\{\alpha_1, \alpha_2, \cdots, \alpha_n\}$ is a subset of $\F_q$ of size $n.$
Let ${\bf c}_1=(a_1\alpha_1+b_1, a_1\alpha_2+b_1,\cdots, a_1\alpha_n+b_1)$ and ${\bf c}_2=(a_2\alpha_1+b_2, a_2\alpha_2+b_2,\cdots, a_2\alpha_n+b_2)$ be {any} two distinct codewords of $\mC$, that is, $a_1, a_2, b_1, b_2\in \F_q$ such that $(a_1,a_2)\neq(b_1,b_2).$ By Equation~\eqref{insd}, the insdel distance between $\mathbf{c}_1$ and $\mathbf{c}_2$ equals to $2(n-\ell)$ where $\ell$ is the length of {a} longest common subsequence of $\mathbf{c}_1$ and $\mathbf{c}_2.$ In other words, there exist $1\leq i_1<\dots<i_\ell\leq n$ and $1\leq j_1<\dots<j_\ell\leq n$ such that
\begin{equation} \label{commonsubword}
a_1\alpha_{i_t}+b_1=a_2\alpha_{j_t}+b_2 \ \mathrm{for} \ t=1,\dots,\ell.
\end{equation}
If {either $a_1=0$ or $a_2=0$}, it is clear that $\ell\leq 1$ which implies $d(\mathbf{c}_1,\mathbf{c}_2)= 2n-2\ell\geq 2n-2.$ The remaining case is when $a_1$ and $a_2$ are both non-zero. Put $\delta=a_1/a_2$ and $\gamma=(b_2-b_1)/a_2.$ Note that $(\delta,\gamma)\neq (1,0)$ since $(a_1,b_1)\neq (a_2,b_2).$ Equation~(\ref{commonsubword}) implies 
\begin{equation*}
\delta \alpha_{i_t}=\alpha_{j_t}+\gamma~\mathrm{~for~}t=1,\cdots,\ell.
\end{equation*}
So $\ell\leq |\delta S\cap (S+\gamma)|,$ which implies
\begin{equation}\label{insdellowerbound}
d(\mathbf{c}_1,\mathbf{c}_2)\geq 2n-2|\delta S\cap (S+\gamma)|.
\end{equation}
{Hence}, the construction of insdel Reed-Solomon codes with high minimum insdel distance {is} reduced to the construction of a subset $S$ of $\F_q$ of size $n$ {such that $|\delta S\cap (S+\gamma)|$ is small} for any $(\delta,\gamma)\in \F_q\times \F_q\setminus\{(1,0)\}.$

\section{Reed-Solomon Codes with Insdel Error-Correcting Capability Asymptotic to its Length} \label{sec:constr1}

In this section, we prove Theorem~\ref{main1}, which 
provides an explicit construction of $2$-dimensional Reed-Solomon codes with minimum insdel distance asymptotically close to the Singleton bound.
The first step in our construction is to construct an explicit family of Reed-Solomon codes with insdel distance up to its length $n$. This construction is then generalized to obtain Reed-Solomon codes with insdel distance asymptotic to $2n$.

\begin{lemma}  \label{EC}
Let $\F_q$ be a finite field with $q$ elements of characteristic $p$ such that $q=p^e$ for some positive integer $e>1$. Furthermore, assume that $e$ can be written as a product of two factors that are coprime and proportional to each other with the smallest being at least $5$. Then there exists an explicit family of $2$-dimensional Reed-Solomon codes over $\F_q$ of length $n=\exp(\sqrt{\log q}),$ with minimum insdel distance of at least $n-2.$
\end{lemma}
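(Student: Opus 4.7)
The plan is to invoke the reduction at the end of Section~\ref{sec:imp}: constructing the desired code reduces to exhibiting an $n$-element subset $S\subseteq \F_q$ such that $|\delta S\cap (S+\gamma)|$ is bounded by an absolute constant $c$ for every $(\delta,\gamma)\in \F_q\times \F_q\setminus\{(1,0)\}$, which then yields $d(\mC)\geq 2n-2c$ for the associated $2$-dimensional Reed-Solomon code $\mC$. Write $e=e_1e_2$ with $\gcd(e_1,e_2)=1$, $e_1$ and $e_2$ proportional, and $\min(e_1,e_2)\geq 5$. Since $\gcd(e_1,e_2)=1$, any $\beta\in \F_q$ which generates $\F_{p^{e_2}}$ over $\F_p$ has minimal polynomial of degree exactly $e_2$ over $\F_{p^{e_1}}$, so $\{1,\beta,\beta^2,\ldots,\beta^{e_2-1}\}$ is an $\F_{p^{e_1}}$-basis of $\F_q=\F_{p^{e_1}}[\beta]$. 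Choose an integer $m>1$ with $\gcd(m,p)=1$ (take $m=2$ when $p$ is odd and $m=3$ when $p=2$) and define
$$ S\;:=\;\{\,a+\beta a^m : a\in \F_{p^{e_1}}\,\}\;\subseteq\;\F_q. $$
Any collision $a_1+\beta a_1^m=a_2+\beta a_2^m$ with $a_1\neq a_2$ forces $\beta=-(a_1^{m-1}+a_1^{m-2}a_2+\cdots+a_2^{m-1})^{-1}\in \F_{p^{e_1}}$, contradicting $\beta\notin\F_{p^{e_1}}$; hence $|S|=n:=p^{e_1}$.

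To bound $|\delta S\cap (S+\gamma)|$, fix $(\delta,\gamma)\neq(1,0)$ and expand $\delta=\sum_{i=0}^{e_2-1}\delta_i\beta^i$, $\gamma=\sum_{i=0}^{e_2-1}\gamma_i\beta^i$ with $\delta_i,\gamma_i\in \F_{p^{e_1}}$. A point of the intersection corresponds to $(a,a')\in \F_{p^{e_1}}^2$ satisfying $\delta a+\delta\beta a^m-\gamma=a'+\beta a'^m$. Comparing $\beta^i$-coefficients on both sides (and reducing $\beta^{e_2}=\sum_jc_j\beta^j$), the vanishing of the $\beta^i$-part of the right-hand side for $i\geq 2$ forces
$$ P_i(a)\;=\;\delta_i a+(\delta_{i-1}+\delta_{e_2-1}c_i)a^m-\gamma_i\;=\;0 \qquad (i=2,\ldots,e_2-1), $$
each of degree $\leq m$, while eliminating $a'$ from the $\beta^0$- and $\beta^1$-parts produces one further equation $Q(a)=0$ of degree $\leq m^2$. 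If any of $P_2,\ldots,P_{e_2-1},Q$ is not identically zero it has at most $m^2$ roots in $\F_{p^{e_1}}$, giving $|\delta S\cap (S+\gamma)|\leq m^2$.

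The key step is to rule out simultaneous identical vanishing of all these polynomials. Vacuity of $P_{e_2-1}$ immediately yields $\delta_{e_2-1}=0$, after which the vacuity of $P_2,\ldots,P_{e_2-2}$ reduces to $\delta_i a+\delta_{i-1}a^m-\gamma_i\equiv 0$, propagating to $\delta_j=0$ for every $j\geq 1$ and $\gamma_j=0$ for every $j\geq 2$; the hypothesis $\min(e_1,e_2)\geq 5$ ensures the cascade has enough rungs. Then $\delta=\delta_0\in \F_{p^{e_1}}$ and $Q\equiv 0$ collapses to the polynomial identity $(\delta_0a-\gamma_0)^m\equiv \delta_0 a^m-\gamma_1$ in $\F_{p^{e_1}}[a]$. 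Matching the $a^m$-coefficient gives $\delta_0^m=\delta_0$, and together with $\delta_0\neq 0$ and $m\in\{2,3\}$ this pins down $\delta_0=1$; the $a^{m-1}$-coefficient is $-m\gamma_0$, and $\gcd(m,p)=1$ forces $\gamma_0=0$; the constant term then gives $\gamma_1=0$. Hence $(\delta,\gamma)=(1,0)$, contradicting our assumption.

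Combining, $|\delta S\cap (S+\gamma)|\leq m^2$ for every $(\delta,\gamma)\neq(1,0)$, so $d(\mC)\geq 2n-2m^2\geq n-2$ for all sufficiently large $q$ (certainly for $n=p^{e_1}\geq 2^5$). For the length, the coprime-and-proportional factorization $e=e_1e_2$ gives $e_1=\Theta(\sqrt{e})$, so $\log n=e_1\log p=\Theta(\sqrt{\log q})$ (with $p$ treated as a constant), matching the claimed $n=\exp(\sqrt{\log q})$ up to absolute constants. The main obstacle is the cascade argument together with the polynomial-identity step, which rely essentially on $\gcd(e_1,e_2)=1$ (so that $\beta$ actually generates $\F_q$ over $\F_{p^{e_1}}$), on $\min(e_1,e_2)\geq 5$ (so that the cascade has enough levels to kill every $\delta_j$), and on $\gcd(m,p)=1$ (so that the middle binomial coefficient is a unit and no Frobenius-induced cancellation spoils the polynomial identity).
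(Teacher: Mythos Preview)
Your proof is correct and takes a genuinely different route from the paper's. The paper chooses $S=G_1\cup G_2$ as a union of two affine pieces sitting inside the distinct subfields $\F_{p^{t_1}}$ and $\F_{p^{t_2}}$ of $\F_q$, and then carries out a lengthy case analysis on which subfield $\delta$ and $\gamma$ belong to (together with an auxiliary Claim bounding the ``cross terms'' $\Theta_{1,2},\Theta_{2,1}$ by $p$), arriving at $|\delta S\cap(S+\gamma)|\leq n/2+1$ and hence exactly $d\geq n-2$. You instead take $S=\{a+\beta a^m:a\in\F_{p^{e_1}}\}$, the graph of a low-degree map into a transversal $\beta$-direction, and reduce the intersection count to the number of common roots of a system of univariate polynomials $P_2,\dots,P_{e_2-1},Q$ over $\F_{p^{e_1}}$; the cascade plus the binomial-expansion identity show this system cannot vanish identically unless $(\delta,\gamma)=(1,0)$, giving the \emph{constant} bound $|\delta S\cap(S+\gamma)|\leq m^2\leq 9$. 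Your argument therefore avoids all case-splitting and in fact establishes the much stronger conclusion $d\geq 2n-2m^2$, so this single lemma already delivers what the paper only obtains after the inductive boosting leading to Theorem~\ref{main1} (at exponent $1/2$). The paper's union-of-subfields construction, by contrast, is designed so that it extends to $2^s$ pieces in that induction; your graph construction trades this modularity for a sharper one-shot bound.
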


\begin{proof}

The Inequality \eqref{insdellowerbound} implies that for the insdel distance of $\mC$ to be at least $n-2$, it suffices to find a subset $S$ of $\F_q$ of size $n$ such that 
\begin{equation} \label{lowerbound}
|\delta S\cap (S+\gamma)|\leq \frac{n}{2}+1 \ \text{for any}  \ (\delta,\gamma)\in \F_q\times \F_q\setminus \{(1,0)\}.
\end{equation}
Suppose that $e=t_1t_2$, where $5\leq t_1<t_2$, $\gcd(t_1,t_2)=1$ and $t_1,t_2$ are proportional to each other. Let $F_1= \F_{p^{t_1}}$ and $F_2 = \F_{p^{t_2}}.$ Suppose that for $i=1,2, F_i=\mathrm{span}_{\F_p}\{1,\beta_i,\cdots,\beta_i^{t_i-1}\}.$ Define
 $$G_i^\ast=\mathrm{span}_{\F_p}\{\beta_i,\cdots,\beta_i^{t_1-1}\}\setminus\{0\}, \ G_i=\beta_i+\mathrm{span}_{\F_p}\{\beta_i^2,\cdots,\beta_i^{t_1-1}\}\subseteq G_i^\ast$$ 
 and 
 \begin{equation} \label{setS}
S=G_1\cup G_2. 
 \end{equation}
 Note that $F_1\cap F_2=\F_p$, $|G_1\cap G_2|=0$ and $|G_1|=|G_2|=p^{t_1-2}$.
  Hence, $n=|S|=|G_1\cup G_2|=2p^{t_1-2}$. We prove that the set $S$ defined in Equation \eqref{setS} satisfies Inequality \eqref{lowerbound}.
 
   Fix $\delta,\gamma\in\F_q$ such that $(\delta,\gamma)\neq (1,0).$ We can assume that $\delta\neq 0$, as otherwise we have $|\delta S\cap (S+\gamma)|\leq 1$, which satisfies Inequality \eqref{lowerbound}. For $i,j\in\{1,2\},$ define $\Theta_{i,j}=\delta G_i \cap (G_j+\gamma).$ Then $\delta S\cap(S+\gamma) = \bigcup_{i,j\in\{1,2\}} \Theta_{i,j}$ and 
   \begin{equation} \label{union}
   |\delta S\cap(S+\gamma)| \leq |\Theta_{1,1}|+|\Theta_{1,2}|+|\Theta_{2,1}|+|\Theta_{2,2}|.    
   \end{equation}
Before proving the bounds on $|\Theta_{i,j}|$, we give the following observations which will be used repeatedly in our analysis later.

\begin{itemize}
\item[(i)] If there exist $a,b\in F_i$ such that $\delta a = b+\gamma,$ we have that $\delta \in F_i$ if and only if $\gamma\in F_i.$ So, if $|\Theta_{i,i}|>0,$ we must have either both $\delta$ and $\gamma$ to be in $F_i$ or both not in $F_i.$
\item[(ii)] The set $\{1,\beta_i,\cdots,\beta_i^{t_1-1}\}$ is $\F_p$-linearly independent since they form a subset of a basis of $\F_{p^{t_i}}$ over $\F_p.$ Hence, $\F_p\cap G_1^*=\F_p\cap G_2^*=\emptyset$ and $G_1^*\cap F_2=F_1\cap G_2^*=G_1^*\cap G_2^*=\emptyset.$
\item[(iii)] For any $\lambda\in \F_p^*$ and $a\in G_i^\ast$, we have $\lambda a\in G_i^\ast$. On the other hand, if both $a$ and $\lambda a$ belong to $G_i,$ then $\lambda=1.$
\end{itemize}
The proofs of (i)-(iii) are straightforward and omitted. The following claim gives upper bounds on the numbers $|\Theta_{1,2}|$ and $|\Theta_{2,1}|$.

\begin{claim} \label{lemmaneq}
$|\Theta_{1,2}|\leq p$ and $|\Theta_{2,1}|\leq p$.
\end{claim}

\noindent We prove Claim \ref{lemmaneq} by contradiction. Suppose that $|\Theta_{1,2}|\geq p+1.$ Let $a_1,a_2,x_1,\dots,x_{p-1}$ be $p+1$ distinct elements of $G_1$ and $b_1,b_2,$ $y_1,\dots,y_{p-1}$ be $p+1$ distinct elements of $G_2$ such that $\delta a_i=b_i+\gamma$ for $i=1,2$ and $\delta x_j=y_j+\gamma$ for $j=1,\dots,p-1$.
Now for any $j=1,\dots,p-1,$ we have 
$$\delta=\frac{b_1-b_2}{a_1-a_2}=\frac{b_1-y_j}{a_1-x_j}=\frac{b_2-y_j}{a_2-x_j}.$$ 
Note that $a/b=c/d$ if and only if there exists $\lambda$ such that $a=\lambda c$ and $b=\lambda d.$ So, there exists $\lambda_j,\mu_j\in \F_q\setminus\{0\}$ such that $b_1-b_2=\lambda_j(b_1-y_j), b_1-b_2=\mu_j(b_2-y_j), a_1-a_2=\lambda_j(a_1-x_j)$ and $a_1-a_2=\mu_j(a_2-x_j).$
Furthermore, note that $\lambda_j=(b_1-b_2)/(b_1-y_j)\in F_2\setminus\{0,1\}$ and $\lambda_j=(a_1-a_2)/(a_1-x_j)\in F_1\setminus\{0,1\}.$ Since $F_1\cap F_2=\F_p,$ $\lambda_j\in \F_p\setminus\{0,1\}$. Similarly, we have $\mu_j\in \F_p\setminus\{0,-1\}$.
The equations $b_1-b_2=\lambda_j(b_1-y_j)$ and $b_1-b_2=\mu_j(b_2-y_j)$ yield the following system of linear equations:

\begin{equation} \label{sle}
\left.
\begin{array}{rcrcrcl}
(1-\lambda_j)b_1 &-& b_2 & + & \lambda_j y_j &=&0\\
b_1 & - &(1+\mu_j) b_2 & + & \mu_jy_j &=&0
\end{array}
\right\}.
\end{equation}

Consider two vectors $\mathbf{u}=(1-\lambda_j,-1,\lambda_j)$ and $\mathbf{v}=(1,-(1+\mu_j),\mu_j).$ Note that they are a multiple of each other if and only if $\lambda_j+\lambda_j \mu_j = \mu_j.$ If they are not a multiple of each other, then the system in Equation~\eqref{sle} implies that $b_2=y_j,$ which contradicts our assumption that they are distinct. The same analysis tells us that unless $\lambda_j+\lambda_j\mu_j=\mu_j,$ we also have $a_2=x_j.$ So for us to have at least $p+1$ elements in $\Theta_{1,2},$ we must have
$$\lambda_j + \lambda_j\mu_j=\mu_j \ \text{for} \ j=1,\dots, p-1.$$
Since $x_i\neq x_j$ and $y_i\neq y_j$ for any $i\neq j,$ we have $\lambda_i\neq \lambda_j$ and $\mu_i\neq \mu_j.$ So, the set $\Lambda\triangleq\{(\lambda_j,\mu_j):j=1,\dots, p-1\}$ has size $p-1$ and it is a subset of the solution set $\mathcal{S}\triangleq\{(\lambda,\mu)\in \F_p^2: \lambda+\lambda\mu=\mu, \lambda\neq 0,1,\mu\neq 0,-1\}.$ Thus, $|\mathcal{S}|\geq p-1.$
On the other hand, note that the relation $\lambda+\lambda\mu=\mu$ provides a one-to-one correspondence between $\lambda\in\F_p$ and $\mu\in \F_p.$ Based on the restriction that $\lambda\neq 0,1,$ there are at most $p-2$ choices for $\lambda.$ Thus, $|\mathcal{S}|\leq p-2$, contradicting the previous observation that $|\mathcal{S}|\geq p-1$.

\medskip

Now we are ready to prove Inequality \eqref{lowerbound}.
Using the Inequality \eqref{union}, we derive the bound for $|\delta S\cap (S+\gamma)|$ in different cases.
\begin{itemize}
\item Case $1:$ $(\delta,\gamma)\in (F_1)^2.$
\begin{itemize}
\item $\Theta_{1,1}.$ Since $\Theta_{1,1}=\delta G_1\cap (G_1+\gamma) \subseteq \delta G_1,$ we have $|\Theta_{1,1}|\leq |\delta G_1|=|G_1|=n/2.$
\item $\Theta_{1,2}.$ If there exist $a\in G_1$ and $b\in G_2$ such that $\delta a=b+\gamma$, then $b=\delta a - \gamma\in F_1\cap G_2=\emptyset$, a contradiction. So $|\Theta_{1,2}|=0.$
\item $\Theta_{2,1}.$ If there exist $a\in G_1$ and $b\in G_2$ such that $\delta b = a+\gamma$, then $b=\delta^{-1}(a+\gamma)\in F_1\cap G_2=\emptyset.$ So $|\Theta_{2,1}|=0.$
\item $\Theta_{2,2}.$ We claim that $|\Theta_{2,2}|\leq 1$. Suppose that there exist two distinct elements $(a_1,b_1)$ and $(a_2,b_2)$ in $G_2\times G_2$ such that $\delta a_i=b_i+\gamma, i=1,2.$
Hence $\delta=(b_1-b_2)/(a_1-a_2) \in F_1\cap (F_2\setminus\{0\})=\F_p\setminus \{0\}$. By the observation $F_1\cap G_2^*=\emptyset$ from (ii), we have $\gamma=\delta a_1-b_1\in F_1\cap (G_2^*\cup \{0\})=\{0\}$, that is, $\gamma=0$. Note that $\delta \neq 0$, hence both $a_1$ and $\delta a_1=b_1$ belong to $G_2$, which implies $\delta=1$ by observation (iii). We have $(\delta,\gamma)=(1,0)$, a contradiction. Thus $|\Theta_{2,2}|\leq 1$.
\end{itemize}
We obtain $|\delta S\cap (S+\gamma)|\leq n/2+1$ in this case.

\item Case $2:$ $\delta\in F_1$, $\gamma\not\in F_1.$

\begin{itemize}

\item $\Theta_{1,1}.$ By observation (i), we have $|\Theta_{1,1}| = 0.$
\item $\Theta_{1,2}.$ If there exist two distinct elements $(a_1,b_1)$ and $(a_2,b_2)$ in $G_1\times G_2$ such that $\delta a_i=b_i+\gamma$, $i=1,2$, then $\delta (a_1-a_2)=b_1-b_2 \in F_1\cap G_2 \subseteq F_1\cap G_2^*=\emptyset$, where the last equation follows from observation (ii), a contradiction. So, $|\Theta_{1,2}|\leq 1$.
\item $\Theta_{2,1}.$ By the same argument as in the case of $\Theta_{1,2},$ we have $|\Theta_{2,1}|\leq 1.$
\item $\Theta_{2,2}.$ As $\Theta_{2,2}=\delta G_2\cap (G_2+\gamma) \subseteq \delta G_2$, we have $|\Theta_{2,2}|\leq \delta |G_2|$. 

If $|\delta_{2,2}|\leq |G_2|-1$, then by combining with the analysis of $\Theta_{1,1}, \Theta_{1,2}$ and $\Theta_{2,1}$, we obtain $|\delta S\cap (S+\gamma)|\leq \sum_{i,j\in \{1,2\}} |\Theta_{i,j}|\leq |G_2|+1=n/2+1.$ Now assume that $|\Theta_{2,2}|= |G_2|=p^{t_1-2}\geq 2$. Let $(a_1,b_1)$ and $(a_2,b_2)$ be two distinct elements in $G_2\times G_2$ such that $\delta a_i=b_i+\gamma$, $i=1,2$. Hence, $\delta=(b_1-b_2)/(a_1-a_2) \in F_2$ and $\gamma=\delta a_1-b_1\in F_2$. Similar to the analysis of Case 1, we have $|\delta S\cap (S+\gamma)|\leq |G_2|+1=n/2+1.$
\end{itemize}

We obtain  $|\delta S\cap (S+\gamma)|\leq n/2+1$ in this case.  

\item Case $3:$ $\delta \not\in F_1$, $\gamma\in F_1.$ \\
In this case, we have $|\Theta_{1,1}|=0$ by observation (i).
For $\Theta_{1,2},\Theta_{2,1}$ and $\Theta_{2,2},$ we further divide this case into smaller cases depending on the relation between $\delta,\gamma$ and $F_2.$
\begin{itemize}
\item Case $3.1:$ $\delta\in F_2\setminus F_1,\gamma\in F_1\cap F_2.$ 
\begin{itemize}
\item $\Theta_{2,2}$. By similar analysis as in Case $1,$ we have $|\Theta_{2,2}|\leq |G_2|.$
\item $\Theta_{1,2}.$ If there exist $a\in G_1$ and $b\in G_2$ such that $\delta a=b+\gamma$, then $a=\delta^{-1}(b+\gamma)\in F_2\cap G_1=\emptyset$, a contradiction. Thus $|\Theta_{1,2}|=0$.
\item $\Theta_{2,1}.$ If there exist $a\in G_1$ and $b\in G_2$ such that $\delta b=a+\gamma$, then $a=\delta b-\gamma \in F_2\cap G_1=\emptyset$, a contradiction. Thus $|\Theta_{2,1}|=0.$
\end{itemize}
We obtain $|\delta S\cap (S+\gamma)|\leq |G_2|=n/2$ in this case. 

\item Case $3.2:$ $\delta\in F_2\setminus F_1$ and $\gamma\in F_1\setminus F_2.$
\begin{itemize}
\item $\Theta_{2,2}$. By observation (i), we have $|\Theta_{2,2}|=0.$ 
\item $\Theta_{1,2}$ and $\Theta_{2,1}$. Similar to the analysis of Case 2, we have $|\Theta_{2,1}|\leq 1$ and $|\Theta_{1,2}|\leq 1.$
\end{itemize}
We obtain $|\delta S\cap (S+\gamma)|\leq 2$ in this case. 

\item Case $3.3:$ $\delta\not\in F_1\cup F_2$ and $\gamma\in F_1\cap F_2.$
\begin{itemize}
\item $\Theta_{2,2}$. By observation (i), we have $\Theta_{2,2}=\emptyset.$ 
\item $\Theta_{1,2}$ and $\Theta_{2,1}.$ By Claim~\ref{lemmaneq}, we have $|\Theta_{1,2}|+|\Theta_{2,1}|\leq 2p.$
\end{itemize}
We obtain $|\delta S\cap (S+\gamma)|\leq 2p$ in this case.

\item Case $3.4:$ $\delta\not\in F_1\cup F_2$ and $\gamma\in F_1\setminus F_2$.
\begin{itemize}
\item $\Theta_{2,2}$. If $|\Theta_{2,2}|\geq 2$, then there exist distinct elements $(a_i,b_i) \in F_2\times F_2$ such that $\delta a_i=b_i+\gamma$, which implies $\delta=\frac{b_1-b_2}{a_1-a_2}\in F_2,$ a contradiction. Thus $|\Theta_{2,2}|\leq 1.$
\item $\Theta_{1,2}$ and $\Theta_{2,1}.$ By Claim~\ref{lemmaneq}, we  have $|\Theta_{1,2}|+|\Theta_{2,1}|\leq 2p.$
\end{itemize}

We obtain $|\delta S\cap (S+\gamma)|\leq 2p+1$ in this case.
\end{itemize}
In summary, we obtain $|\delta S\cap (S+\gamma)|\leq \max\{n/2,2p+1\}$ for Case 3.

\item Case $4:$ $\delta,\gamma\not\in F_1.$\\ 
By similar argument as in the analysis of $\Theta_{2,2}$ in Case $3.4,$ we have $|\Theta_{1,1}|\leq 1.$ For $\Theta_{1,2},\Theta_{2,1}$ and $\Theta_{2,2},$ we further divide this case into smaller cases depending on the relation between $\delta,\gamma$ and $F_2.$
\begin{itemize}
\item Case $4.1:$ $\delta,\gamma\in F_2\setminus F_1.$ By similar argument as in Case $1,$ we obtain $|\delta S\cap (S+\gamma)|\leq n/2+1.$
\item Case $4.2:$ $\delta\in F_2\setminus F_1$ and $\gamma\not\in F_1\cup F_2.$ By similar argument as in Case $3.2$, we obtain $|\delta S\cap (S+\gamma)|\leq 3.$
\item Case $4.3:$ $\delta\not\in F_1\cup F_2$ and $\gamma\in F_2\setminus F_1.$ By observation (i) and Claim~\ref{lemmaneq}, we have $|\Theta_{2,2}|=0$ and $|\Theta_{1,2}|+|\Theta_{2,1}|\leq 2p.$
Hence $|\delta S\cap (S+\gamma)|\leq 2p+1.$

\item Case $4.4: \delta,\gamma\not\in F_1\cup F_2.$ By similar argument as in the analysis of $\Theta_{1,1}$ in Case $4$, we have $|\Theta_{2,2}|\leq 1.$ Combining with Claim~\ref{lemmaneq}, we obtain $|\delta S\cap (S+\gamma)|\leq 2p+2.$
\end{itemize}
We obtain $|\delta S\cap (S+\gamma)|\leq \max\{n/2+1,2p+2,3\}=\max\{n/2+1,2p+2\}$ for Case 4.
\end{itemize}
Summarizing the results of the four cases, we obtain
$$|\delta S\cap (S+\gamma)|\leq \max\left\{\frac{n}{2}+1,2p+2,3\right\}=\max\left\{\frac{n}{2}+1,2p+2\right\}=\frac{n}{2}+1,$$
where in the last equation we use $n=2p^{t_1-2}$ and $t_1\geq 5$. Thus, the Inequality \eqref{lowerbound} is proved.

\medskip

Lastly, it remains to show that the code $\mC$ has length $n=2p^{t_1-2}=\exp(\sqrt{\log(q)})$. As $t_2=\Theta(t_1)$, we have $e=t_1t_2=\Theta(t_1^2)=\Theta((\log n)^2)$, which implies $q=\exp(\Theta((\log n)^2))$. Hence $n=\exp(\sqrt{\log q})$.

\end{proof}

\begin{rmk}
The insdel error-correcting capability of the Reed-Solomon code constructed in Lemma \ref{EC} is $\lfloor (n-3)/2 \rfloor$, {which outperforms the insdel error-correcting capabilities of Reed-Solomon codes and subcodes contructed in ~\cite{TS2007}}. The Reed-Solomon codes constructed in ~\cite[Corollary 1]{TS2007} have insdel error-correcting capability $O\left(\log n\right)$ and its subcodes ~\cite[Theorem 3, Theorem 4]{TS2007} have capability $n/3-1$.
The cost of this improvement comes in the {increment} of the field size compared to the code length. While our construction requires $q=\exp((\log n)^2),$ the previous constructions allow $q$ to be as small as $n.$
\end{rmk}

\medskip

Now we are ready to prove Theorem \ref{main1}. 
\mainone*
\begin{proof}
Let $s\in \Z^+$ such that $\epsilon>2^{-s}$. Write $e=t_1\cdots t_{2^s}$, where $t_1\geq 5$ and $t_1,\dots, t_{2^s}$ are consecutive primes. Let $p$ be a prime and put $q=p^e.$ To prove Theorem \ref{main1}, it suffices to construct a $2$-dimensional Reed-Solomon code over $\F_q$ with length $n=\exp\left((\log q)^{\frac{1}{2^s}}\right)$ and insdel distance $d$ satisfying 
\begin{equation} \label{insdellower}
d\geq 2\left(n\left(1-\frac{1}{2^s}\right)-(2^s-1)\left(\frac{p}{3}\cdot 2^s-\frac{2p}{3}+1\right)\right).
\end{equation}
Indeed, assuming the existence of the above code, we have $n=\exp\left((\log q)^{\frac{1}{2^s}}\right)=\exp((\log q)^\epsilon)$ and 
$$d\geq 2n\left(1-\frac{1}{2^s}\right)+O(2^{2s})\geq 2(1-\epsilon)n+O(1/\epsilon^2).$$
From now on, we focus on constructing $2$-dimensional Reed-Solomon codes over $\F_q$ with length $n=\exp\left((\log q)^{1/2^s}\right)$ and insdel distance $d$ satisfying Inequality \eqref{insdellower}.

\medskip

Denote $F=\F_q.$ For $i=1,\cdots, 2^s,$ we denote $F_i=\F_{p^{t_i}}$ and suppose that $F_i$ is the splitting field of $\beta_i$ over $\F_p$, that is, $F_i=\mathrm{span}_{\F_p}\{1,\cdots, \beta_i,\cdots,\beta_i^{t_i-1}\}.$ We further denote $G_i^\ast = \mathrm{span}_{\F_p}\{\beta_i,\cdots,\beta_i^{t_1-1}\}\setminus\{0\}$ and $G_i=\beta_i+\mathrm{span}_{\F_p}\{\beta_i^2,\cdots,\beta_i^{t_1-1}\}.$ For $1\leq i<j\leq 2^s,$ define $F_{[i,j]}=\F_{p^{\prod_{k=i}^j t_k}}$ and $G_{[i,j]}=\bigcup_{k=i}^j G_k.$ Lastly, we define $S=G_{[1,2^s]}.$ As before, $S$ represents the set of the $n$ values $\alpha_1,\cdots,\alpha_n$ used to construct the code $\mathcal{C}.$ In this case, we have $n=2^s|G_1|=2^sp^{t_1-2}.$ Note that
$$2\frac{n}{2^s}+2(2^{s}-1)\left(\frac{p}{3}\cdot 2^s-\frac{2p}{3}+1\right)=|G_1|+\sum_{i=0}^{s-1}\left(2^i(2^i-1)p+2^i\right).$$
By {Inequality} \eqref{insdellowerbound}, it suffices to prove the following
\begin{equation} \label{key}
\max_{\delta,\gamma\in F,\delta\neq 0,(\delta,\gamma)\neq(1,0)}|\delta S\cap (S+\gamma)|\leq |G_1|+\sum_{i=0}^{s-1}\left(\left(2^i(2^i-1)p\right)+2^i\right).
\end{equation}
We prove Inequality \eqref{key} by induction on $t.$  More precisely, we prove that for any $1\leq t\leq s$, $1\leq x\leq 2^s-2^t+1,$ and $\delta,\gamma\in F_{[x,x+2^t-1]}$ with $\delta\neq 0$, $(\delta,\gamma)\neq (1,0)$, we have 
\begin{equation} \label{induction}
|\delta G_{[x,x+2^t-1]} \cap (G_{[x,x+2^t-1]}+\gamma)|\leq |G_1|+\sum_{i=0}^{t-1}\left(\left(2^i(2^i-1)p\right)+2^i\right).
\end{equation}
Note that Inequality \eqref{key} follows from Inequality \eqref{induction} by taking $t=s$ and $x=1$.
For $t=1$, Inequality \eqref{induction} is proved by Lemma~\ref{EC}. Suppose that the claim is true for $t=l\leq s.$ This implies that for any $1\leq x\leq 2^s-2^l+1,$ and any $\delta,\gamma\in F_{[x,x+2^l-1]}$ with $\delta \neq 0, (\delta,\gamma)\neq (1,0)$, we have $|\delta G_{[x,x+2^l-1]}\cap(G_{[x,x+2^l -1]}+\gamma)|\leq |G_1|+\sum_{i=0}^{l-1}\left(\left(2^i(2^i-1)p\right)+2^i\right).$ 
Now we prove the claim for $t=l+1\leq s.$ Fix $x\in\{1,\cdots, 2^s-2^{l+1}+1\}.$ For simplicity of notation, put
$$H_1 = G_{[x,x+2^{l}-1]}, \ H_2 = G_{[x+2^{l},x+2^{l+1}-1]}, \ H = H_1\cup H_2, $$
$$E_1 = F_{[x,x+2^{l}-1]}, \ E_2 = F_{[x+2^{l},x+2^{l+1}-1]}, \ E= F_{[x,x+2^{l+1}-1]}.$$
It is clear $H_1\cap H_2=H_1\cap E_2=H_2\cap E_1=\emptyset$ and $E_1\cap E_2=\F_p.$ Our aim is to prove that for any $\delta,\gamma\in E,\ \delta\neq 0,\ (\delta,\gamma)\neq (1,0)$, we have 
$$|\delta H\cap (H+\gamma)|\leq |G_1|+\sum_{i=0}^{l}\left(\left(2^i(2^i-1)p\right)+2^i\right).$$ 
For any $i,j\in \{1,2\},$ denote $\Theta_{i,j}= \delta H_i \cap (H_j+\gamma).$ Hence $|\delta H\cap (H+\gamma)|\leq \sum_{i=1}^{2}\sum_{j=1}^2 |\Theta_{i,j}|.$ As before, we consider into $4$ cases depending on the values of $\delta$ and $\gamma$ to investigate the numbers $|\Theta_{i,j}|$.
\begin{itemize}
\item Case 1: $\delta,\gamma\in E_1.$
\begin{itemize}
\item $\Theta_{1,1}$. By the induction hypothesis, we have $|\Theta_{1,1}|\leq |G_1|+\sum_{i=0}^{l-1}\left(\left(2^i(2^i-1)p\right)+2^i\right).$
\item $\Theta_{1,2}.$ If there exist $a\in H_1$ and $b\in H_2$ such that $\delta a=b+\gamma$, then $b= \delta a-\gamma\in E_1\cap H_2=\emptyset,$ a contradiction. Thus $|\Theta_{1,2}|=0.$
\item $\Theta_{2,1}.$ Using similar argument as $\Theta_{1,2},$ we have $|\Theta_{2,1}|=0.$
\item $\Theta_{2,2}.$ Note that $|\Theta_{2,2}|\leq \sum_{i=x+2^l}^{x+2^{l+1}-1}\sum_{j=x+2^l}^{x+2^{l+1}-1} |\delta G_i\cap (G_j+\gamma)|.$ By Lemma~\ref{lemmaneq}, we have  $|\delta G_i\cap (G_j+\gamma)|\leq p$ for any $i\neq j$. A similar argument as in the analysis of $\Theta_{2,2}$ in Case $1$ of Lemma~\ref{EC} gives $|\delta G_i\cap (G_i+\gamma)|\leq 1.$ Hence $|\Theta_{2,2}|\leq 2^l(2^l-1)p + 2^l.$
\end{itemize} 
We obtain $|\delta H \cap (H+\gamma)|\leq |G_1|+\sum_{i=0}^{l-1}\left(\left(2^i(2^i-1)p\right)+2^i\right)+2^l(2^l-1)p + 2^l=|G_1|+\sum_{i=0}^l\left(2^i(2^i-1)p+2^i\right)$ in this case.
\end{itemize}

\begin{itemize}

\item Case $2: \delta\in E_1,\gamma\not\in E_1.$
\begin{itemize}
\item $\Theta_{1,1}:$ If there exist $a,b\in H_1$ such that $\delta a=b+\gamma,$ then $\gamma=\delta a-b\in E_1,$ contradicting the assumption that $\gamma\not\in E_1.$ So $|\Theta_{1,1}|=0.$
\item $\Theta_{1,2}.$ Note that the argument in Claim~\ref{lemmaneq} still holds if we replace the sets $G_1$ and $G_2$ by $H_1$ and $H_2$, respectively. Hence $|\Theta_{1,2}|\leq p.$
\item $\Theta_{2,1}.$ Using the same argument as before, we have $|\Theta_{2,1}|\leq p.$
\item $\Theta_{2,2}.$  If $|\Theta_{2,2}|\leq 1$, then we have $\delta H\cap (H+\gamma)|\leq 2p+1$.\\ 
Suppose that $|\Theta_{2,2}|\geq 2$. There exist two distinct elements $(a_i,b_i)\in H_2\times H_2$, $i=1,2,$ such that $\delta a_i=b_i+\gamma$, which implies $\delta=(b_1-b_2)/(a_1-a_2)\in E_2$ and $\gamma=\delta a_1-b_1\in E_2$. By Case 1, we have $\delta H\cap (H+\gamma)|\leq |G_1|+\sum_{i=0}^l \left(2^i(2^i-1)p+2^i\right)$.

\end{itemize}
We have $|\delta H\cap (H+\gamma)|\leq |G_1|+\sum_{i=0}^{l}\left(2^i(2^i-1)p+2^i\right)$ in this case.
\end{itemize}

\begin{itemize}

\item Case $3: \delta\not\in E_1$, $\gamma\in E_1.$\\ 
If there exists $a,b\in H_1$ such that $\delta a=b+\gamma,$ we have $\delta = a^{-1}(b+\gamma)\in E_1,$ a contradiction. So $|\Theta_{1,1}|=0.$
\begin{itemize}
\item Case $3.1: \delta\in E_2\setminus E_1,\gamma\in E_1\cap E_2.$ By similar argument asin  Case $1,$ we have $|\Theta_{2,2}|\leq |G_1|+\sum_{i=0}^{l-1}\left(2^i(2^i-1)p+2^i\right)$ and $|\Theta_{1,2}|+|\Theta_{2,1}|=0.$ Hence $|\delta H\cap (H+\gamma)|\leq |G_1|+\sum_{i=0}^{l-1}\left(2^i(2^i-1)p+2^i\right).$
\item Case $3.2: \delta\in E_2\setminus E_1,\gamma\in E_1\setminus E_2.$ By similar argument as in Case $2,$ we have $\Theta_{2,2}=\emptyset, |\Theta_{1,2}|\leq p$ and $|\Theta_{2,1}|\leq p$. Hence $|\delta H\cap (H+\gamma)|\leq 2p.$
\item Case $3.3: \delta \not\in E_1\cup E_2,\gamma\in E_1\cap E_2.$ By similar argument as in the analysis of $\Theta_{1,1}$ in Case $3,$ we have $|\Theta_{2,2}|=0$. By Claim~\ref{lemmaneq}, we have $|\Theta_{1,2}|\leq p$ and $|\Theta_{2,1}|\leq p.$ Hence $|\delta H\cap (H+\gamma)|\leq 2p.$
\item Case $3.4: \delta\not\in E_1\cup E_2,\gamma\in E_1\setminus E_2.$ By similar argument as in the analysis of $\Theta_{2,2}$ in Case $1,$ we have $|\Theta_{2,2}|\leq 2^l(2^l-1)p+2^l.$ By Claim~\ref{lemmaneq}, we have $|\Theta_{1,2}|\leq p$ and $|\Theta_{2,1}|\leq p.$ Hence, $|\delta H \cap (H+\gamma)|\leq 2p+2^l(2^l-1)p+2^l.$
\end{itemize}

We obtain in this case, 
\begin{eqnarray*}
|\delta H \cap (H+\gamma)| &\leq& \max\Big\{ |G_1|+\sum_{i=0}^{l-1}\left(2^i(2^i-1)p+2^i\right), 2p+2^l(2^l-1)p+2^l\Big\} \\
& <& |G_1|+\sum_{i=0}^{l}\left(2^i(2^i-1)p+2^i\right).
\end{eqnarray*}
\end{itemize}

\begin{itemize}
\item Case $4:$ $\delta,\gamma\not\in E_1.$\\ 
By similar argument as in the analysis of $\Theta_{2,2}$ in Case $3.4,$ we have $|\Theta_{1,1}|\leq 2^l(2^l-1)p+2^l.$ For $\Theta_{1,2}, \Theta_{2,1}$ and $\Theta_{1,1}$, we consider the following sub-cases. 
\begin{itemize}
\item Case $4.1:$ $\delta,\gamma\in E_2\setminus E_1.$ By similar argument as in Case $1,$ we have $|\delta H\cap (H+\gamma)|\leq |G_1|+\sum_{i=0}^{l}\left(2^i(2^i-1)p+2^i\right).$
\item Case $4.2:$ $\delta\in E_2\setminus E_1$, $\gamma\not\in E_1\cup E_2.$ This case is similar as Case $3.2$, so $|\delta H\cap (H+\gamma)|\leq 2p+2^l(2^l-1)p+2^l.$
\item Case $4.3:$ $\delta\not\in E_1\cup E_2$, $\gamma\in E_2\setminus E_1.$ In this case, we have $\Theta_{2,2}=\emptyset$ and $|\Theta_{1,2}|\leq p$ and $|\Theta_{2,1}|\leq p.$
Hence $|\delta H\cap (H+\gamma)|\leq 2p+2^l(2^l-1)p+2^l.$
\item Case $4.4: \delta,\gamma\not\in E_1\cup E_2.$ In this case, we have $|\Theta_{2,2}|\leq 2^l(2^l-1)p+2^l$, $|\Theta_{1,2}|\leq p$ and $|\Theta_{2,1}|\leq p$.
Hence $|\delta H\cap (H+\gamma)|\leq 2p+2\cdot(2^l(2^l-1)p+2^l).$
\end{itemize}
We obtain $|\delta H\cap (H+\gamma)|\leq |G_1|+\sum_{i=0}^{l}\left(2^i(2^i-1)p+2^i\right)$ in this case.

\end{itemize}
Summarizing the results of the four cases, we obtain
$$|\delta H \cap (H+\gamma)|\leq |G_1|+\sum_{i=0}^l((2^i(2^i-1)p)+2^i),$$ 
proving Inequality \eqref{induction}. To finish the proof, we need to show that $n=\exp\left((\log q)^{1/2^s}\right)$. Note that $n=2^s p^{t_1-2}$, $q=p^e$  and $e=t_1\cdots t_{2^s}$ in which $5\leq t_1<\cdots<t_{2^s}$. Assuming $t_i=O(t_1)=O(\log n)$ for all $i,$ we have $q=\exp\left((\log n)^{2^s}\right).$ We obtain $n=\exp\left((\log q)^{{1}/{2^s}}\right)$.
\end{proof}

\bigskip

\section{Reed-Solomon Codes with Insdel Error-Correcting Capability up to its Length} \label{sec:constr2}



The aim of this section is to prove Theorem~\ref{main2}, which claims the existence of an $[n,2,2n-4]_q$-insdel Reed-Solomon code $\mC$ whose minimum insdel distance $d=2n-4$ meets the bound provided by Theorem \ref{improvedbound}. A combination of Theorem \ref{improvedbound} and  {Inequality} (\ref{insdellowerbound}) implies that $\mC$  exists if there {is} a subset $S$ of $\F_q$ such that $|S|=n$ and $|\delta S\cap (S+\gamma)|\leq 2$ for any $(\delta,\gamma)\in \F_q\times \F_q\setminus\{(1,0)\}$. In general, there are many choices for a subset $S$ of $\F_q$ with the property $|\delta S\cap (S+\gamma)|\leq 2$ for any $(\delta,\gamma)\in \F_q\times \F_q\setminus\{(1,0)\}$. Among these choices, we would like to have $|S|$ to be as large as possible so that the length $n$ of the code $\mC$ is large.
The construction of $S$ can be classified into the following two steps.

\begin{itemize}
\item[1.] $S$ is chosen as a subset of $S'$, the cyclic subgroup of $\F_q^*$ of order $f$, such that 
\begin{equation}\label{assumption}
|\delta S\cap S|\leq 2 \ \text{for any} \ \delta\in \F_q\setminus\{1\}.
\end{equation}
Combining with Lemma \ref{cyclo_refined}, we can prove that $|\delta S\cap (S+\gamma)|\leq 2$ for any $(\delta,\gamma)\in \F_q\times \F_q\setminus \{(1,0)\}$. 
\item[2.] 


Under the assumption~(\ref{assumption}), we choose $S$ to have size largest possible. It is shown in Theorem~\ref{main2} that the asymptotically largest possible size of $S$ is $\Theta(\sqrt{\log q})$.
\end{itemize}

\medskip

The following lemma indicates the first step. The choice of the subset $S$ with the property $|\delta S\cap (S+\gamma)|\leq 2$ for any $(\delta,\gamma)\in \F_q\times \F_q\setminus\{(1,0)\}$ reduces to the choice of a subset $T$ of $\Z_f$ such that $|T|=|S|$ and any nonzero element of $\mathbb{Z}_f$ appears at most twice in $T-T$, where $T-T=\{x-y: x,y\in T\}$ is considered as a multi-set.

\begin{lemma} \label{constr_main}
Let $p$ be a prime. Write $q=p^s=ef+1$ for some $s\in \mathbb{Z}^+,$ and nontrivial divisors $e$ and $f$ of $q-1.$ By ${\rm{ord}}_f(p)$ we denote the smallest positive integer $j$ such that $p^j \equiv 1\pmod{f}$. If 

\begin{itemize}
\item[(i)] $f<\log_{\sqrt{14}}\left(p^{{\rm{ord}}_f(p)}\right)$, and
\item[(ii)] there exists a subset $T$ of $\mathbb{Z}_f$ of size $n$ such that any nonzero element of $\mathbb{Z}_f$ appears at most twice in $T-T$, where $T-T=\{x-y: x,y\in T\}$ is considered as a multi-set,
\end{itemize}
then there exists a Reed-Solomon code $\mC$ over $\F_q$ of dimension $2$, length $n$ and insdel distance $d$ satisfying
$$d\geq 2n-4.$$
\end{lemma}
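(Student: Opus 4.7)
The plan is to choose $S$ inside the cyclic subgroup of $\F_q^*$ of order $f$ and use it as the evaluation set for $\mC$. Fix a primitive element $g$ of $\F_q$ and let $S_0 = \{g^{ie} : i \in \Z_f\}$, the unique subgroup of $\F_q^*$ of order $f$. Define $S = \{g^{te} : t \in T\}$; since $t \mapsto g^{te}$ is injective on $\Z_f$, we have $|S| = n$. Let $\mC$ be the $2$-dimensional Reed-Solomon code with evaluation set $S$. By the framework developed in Section~\ref{sec:imp} (Inequality~\eqref{insdellowerbound}), it suffices to prove $|\delta S \cap (S + \gamma)| \leq 2$ for every $(\delta, \gamma) \in \F_q \times \F_q \setminus \{(1, 0)\}$; once this is in hand, the bound $d \geq 2n - 4$ follows immediately.

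For the case $\gamma = 0$, I would observe that $|\delta S \cap S|$ counts pairs $(a, b) \in S \times S$ with $b = \delta a$. If $\delta = 0$ or $\delta \notin S_0$ the intersection has size at most $1$ or $0$, so I may assume $\delta = g^{ce}$ for a unique $c \in \Z_f \setminus \{0\}$ (nonzero since $\delta \neq 1$). Then such pairs are in bijection with $t \in T$ satisfying $t + c \in T \pmod{f}$, and hypothesis (ii) caps the multiplicity of $c$ in the multi-set $T - T$ by $2$, as required.

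For $\gamma \neq 0$, I would rewrite $\delta a = b + \gamma$ as $1 + x = y$ by setting $x = b/\gamma$ and $y = \delta a/\gamma$. Since $S \subseteq S_0$ and $S_0$ is a multiplicative subgroup, $x$ lies in $\gamma^{-1} S \subseteq \gamma^{-1} S_0 = S_{a_0}$ for a unique $a_0 \in \{0, \ldots, e - 1\}$, and similarly $y \in (\delta \gamma^{-1}) S \subseteq S_{b_0}$ for a unique $b_0$. The number of $(x, y) \in S_{a_0} \times S_{b_0}$ satisfying $1 + x = y$ is precisely the cyclotomic number $(a_0, b_0)$, which hypothesis (i) together with Lemma~\ref{cyclo_refined} bounds by $2$. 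Injectivity of $(a, b) \mapsto (x, y)$ then transfers this bound to $|\delta S \cap (S + \gamma)| \leq 2$; the degenerate subcase $\delta = 0$ is trivial since $\delta S = \{0\}$.

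The main obstacle is exactly this reduction in the third paragraph: although $S$ is in general strictly smaller than $S_0$, the inclusions $\gamma^{-1} S \subseteq S_{a_0}$ and $(\delta \gamma^{-1}) S \subseteq S_{b_0}$ let the count take place within full cosets, so that Lemma~\ref{cyclo_refined} applies cleanly without any sharper cyclotomic estimate tailored to subsets of $S_0$. Once this is recognized, the remainder of the argument is essentially bookkeeping across the cases $\gamma = 0$ and $\gamma \neq 0$.
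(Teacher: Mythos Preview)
Your proof is correct and follows essentially the same route as the paper's: define $S=\{g^{te}:t\in T\}\subseteq S_0$, handle the case $\gamma\neq 0$ by dividing through by $\gamma$ to land in cyclotomic cosets and invoke Lemma~\ref{cyclo_refined}, and handle $\gamma=0$ via the multiplicity hypothesis on $T-T$. The only cosmetic difference is that the paper first proves the $\gamma\neq 0$ bound for the full subgroup $S'=S_0$ and then restricts, whereas you work directly with $S$ and note the inclusions $\gamma^{-1}S\subseteq S_{a_0}$, $(\delta\gamma^{-1})S\subseteq S_{b_0}$; the content is identical.
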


\begin{proof}
We use Lemma \ref{cyclo_refined} for the proof of this lemma. Recall that $g$ be a primitive element of $\F_q$. For each $a\in \Z$, we define $S_a=\{g^a,g^{a+e},\dots,g^{a+(f-1)e}\}$. Set
$$S'=S_0=\{1,g^e,\cdots, g^{(f-1)e}\}.$$
By Lemma \ref{cyclo_refined}, we have $|(1+S_a)\cap S_b|\leq 2$ for any $a,b\in \Z$.
First, we claim that 
\begin{equation} \label{intersect}
|\delta S'\cap (S'+\gamma)|\leq 2 \ \text{whenever} \ \gamma\neq 0.
\end{equation}
If $\delta=0$, then $|\delta S'\cap (S'+\gamma)|=|\{0\}\cap (S'+\gamma)| \leq 1$. Assume $\delta\neq 0$. We have $|\delta S'\cap (S'+\gamma)|=|\delta\gamma^{-1}S'\cap (\gamma^{-1}S'+1)|$. As $\delta\gamma^{-1}S'=S_a$ and $\gamma^{-1}S'=S_b$ for some $a,b\in \Z$, we obtain $|\delta S'\cap (S'+\gamma)|=|S_a\cap (1+S_b)|\leq 2$. The claim is proved.

\medskip

Recall that we need to choose $S\subset \F_q$ such that $|\delta S\cap (S+\gamma)|\leq 2$ for any $(\delta,\gamma)\in \F_q\times \F_q\setminus \{(1,0)\}$. First, we choose $S$ as a subset of $S'$. By (\ref{intersect}), we have
\begin{equation} \label{intersect2}
|\delta S\cap (S+\gamma)|\leq 2 \ \text{for any} \ \gamma\in \F_q\setminus\{0\}.
\end{equation}
It remains to choose $S$ in a way such that $|\delta S\cap S|\leq 2$ for any $\delta \in \F_q\setminus\{1\}$. In the following, we utilize the set $T$ given in the assumption {$(ii)$} to choose $S$.
Let
$$T=\{a_1,\dots,a_n\}$$
be the subset of $\Z_f$ such that any nonzero element of $\Z_f$ appears at most twice in the multi-subset $T-T$ of $\Z_f$. Define the subset $S$ of $S'$ as follows
$$S=\{g^{a_1e},\dots,g^{a_ne}\}.$$
We claim that $|\delta S\cap S|\leq 2$ for any $\delta\in \F_q\setminus\{1\}$. Assume that $|\delta S\cap S|\geq 3$ for some $\delta\in \F_q\setminus\{1\}$. There exist three distinct elements $(a_{i_t},a_{j_t}), t=1,2,3,$ in $T\times T$ such that $\delta g^{a_{i_t}e}=g^{a_{j_t}e}$ for any $t=1,2,3$. Note that $\delta \neq 1$, so $a_{i_t}\neq a_{j_t}$ for any $t$. We have
$\delta^{-1}=g^{(a_{i_1}-a_{j_1})e}=g^{(a_{i_2}-a_{j_2})e}=g^{(a_{i_3}-a_{j_3})e}.$
Hence
$$a_{i_1}-a_{j_1}= a_{i_2}-a_{j_2}=a_{i_3}-a_{j_3} \pmod{f},$$ where the common residue (modulo $f$) is nonzero, contradicting the choice of the set $T$. Therefore, we have $|\delta S\cap S|\leq 2$ for any $\delta\in \F_q\setminus\{1\}$. Combining with (\ref{intersect2}), we obtain
$$|\delta S\cap (S+\gamma)|\leq 2 \ \text{for any} \ (\delta,\gamma)\in \F_q\times \F_q\setminus \{(1,0)\},$$
which concludes our proof.
\end{proof}

\medskip

Lemma~\ref{constr_main} provides a $2$-dimensional Reed-Solomon code $\mC$ of length $n$ given the existence of a subset $T$ of $\Z_f$ such that $|T|=n$ and any nonzero element of $\Z_f$ appears at most twice in the multi-set $T-T$. In general, there are numerous choices of the set $T$ with this property. However, for the length $n$ of the code $\mC$ to be large, we would like to have the choice such that $|T|$ is as large as possible. A simple counting argument shows that $|T|\leq \lceil \sqrt{2f}\rceil$. In fact, we will show that we can choose $T$ with size $|T|\geq \lceil \sqrt{f}/2 \rceil$. This would result in the choice of the set $S$ with $|S|=|T|=\Theta(\sqrt{f})$. We are ready for proof of Theorem \ref{main2}.

\maintwo*

\begin{proof}
First, we claim that the subset $T$ in the assumption {$(ii)$} of Lemma \ref{constr_main} exists and $T$ has size satisfying
\begin{equation} \label{sizeofT}
 \lceil \sqrt{f}/{2} \rceil  \leq |T| \leq \lceil \sqrt{2f} \rceil.
\end{equation} 
The upper bound in \eqref{sizeofT} can be simply obtained as follows.
Since each nonzero element of $\Z_f$ appears at most three times in the multi-set $T-T=\{x-y: x,y\in T\}$, we have
\begin{equation} \label{inequal2}
|T|(|T|-1)\leq 2(f-1).
\end{equation}
If $\sqrt{2f}=m$ is an integer, then {Inequality} \eqref{inequal2} implies $|T|\leq m$. If $\sqrt{2f}$ is not an integer, then {Inequality}~\eqref{inequal2} implies $|T|<\sqrt{2f}+1$. So $|T|\leq \lceil \sqrt{2f} \rceil$ in any case.
It remains to show the existence of the subset $T$ of $\Z_f$ such that any nonzero element of $\Z_f$ appears at most twice in $T-T$ and
\begin{equation} \label{lowerbound2}
|T|\geq \lceil \sqrt{f}/2 \rceil.
\end{equation}
First, we consider the case $f\leq 8$. For ${2}\leq f \leq 4$, the set $T=\{0\}$ has size $|T|=1=\lceil \sqrt{f}/2\rceil$ and $T-T=\{0\}$ does not contain any nonzero element of $\Z_f$. For $5\leq f\leq 8$, the set $T=\{0,1\}$ has size $|T|=2=\lceil \sqrt{f}/2 \rceil$ and $T-T=\{0,1,-1\}$ contains any nonzero element of $\Z_f$ at most once. 
From now on, we assume $f\geq 9$. We claim that there exists a prime $r$ such that
\begin{equation} \label{bertrand}
\frac{\sqrt{f}}{2}\leq r\leq \sqrt{f}-1.
\end{equation}
If $9\leq f\leq 16$, we choose $r=2$. Assume that $f\geq 17$. Put $m=\lceil \sqrt{f}/2\rceil \geq 3$. By Bertrand's postulate \cite{ber}, there exists a prime $r$ such that $m\leq r\leq 2m-3$, which implies \eqref{bertrand}. 

\medskip

Next, we use Lemma \ref{singer} to contruct the set $T$. Applying Lemma~\ref{singer} for $d=3,$ we obtain a subset $D$ of a multiplicative cyclic group $G$ with the properties $|G|=r^2+r+1$, $|D|=r+1$ and every non-identity element of $G$ appears exactly once in the multi-set $DD^{(-1)}=\{xy^{-1}: x,y\in D\}$. 
Let 
$$\pi:G\rightarrow \Z_{r^2+r+1}$$
be an isomorphism between cyclic groups $G$ and $\Z_{r^2+r+1}$. Hence the subset $T=\pi(D)$ of $\Z_{r^2+r+1}$ has the properties that $|T|=r+1$ and every nonzero element of $\Z_{r^2+r+1}$ appears exactly once in the multi-subset $T-T$ of $\Z_{r^2+r+1}$. Note that $r^2+r+1<f$ by \eqref{bertrand}. We consider $T$ as a subset of $\Z_f$, that is, if $T=\{a_0,\dots,a_r\}$ is a subset of $\Z_{r^2+r+1}$, then we take the exact set $T=\{a_0,\dots, a_r\}$ as a subset of $\Z_f$. The only difference is that in $\Z_f$, the operation between elements of $T$ is calculated modulo $f$, so the property that any non-zero element in $\Z_f$ appears exactly once in the multi-set $T-T$ does not necessarily hold any more. By \eqref{bertrand}, it is clear that 
$$|T|=r+1>\sqrt{f}/2,$$
proving the lower bound \eqref{lowerbound2} on $|T|$.
It remains to show that any non-zero element of $\Z_f$ appears at most twice in the multi-subset $T-T$ of $\Z_f$.
Write $T=\{a_0,\dots,a_r\}$ with $0\leq a_0<\cdots<a_r\leq r(r+1)$ and write 
$$X=\{a_i-a_j: i>j\},$$
where $X$ is considered as a normal subset (not a multi-set) of $\Z_f$.
As $X\cup (-X)=\{x-y: x\neq y,\ x,y\in T\}$ has size at most $r(r+1)$ and covers all nonzero residues modulo $r^2+r+1$ (by the property of $T$), we have $|X|=r(r+1)/2$. Write $X=\{x_1,\dots,x_{r(r+1)/2}\}$.
Note that $1\leq x_i\leq r(r+1)$ for all $i$.
The set $(T-T)\setminus \{0\}=\{x-y: x\neq y, x,y\in T\}$, considered as a multi-subset of $\Z_f$, is
$$(T-T)\setminus\{0\}=\{x_1,\dots, x_{r(r+1)/2}, f-x_1,\dots,f-x_{r(r+1)/2}\}.$$
The only possible repetitions in the above set come from $x_i=f-x_j$ for some $i,j\in \{1,\dots, r(r+1)/2\}$. Therefore, any nonzero element of $\Z_f$ appears at most twice in the multi-subset $T-T$ of $\Z_f$. We finish the proof on the claim about the existence of $T$ with size satisfying (\ref{sizeofT}).

\medskip

Lastly, by (\ref{sizeofT}) and Lemma \ref{constr_main}, there exists a $2$-dimensional Reed-Solomon code $\mC$ over $\F_q$ with length $n=|T|=\Theta(\sqrt{f})$ and insdel distance $d\geq 2n-4$. Note that $q=p^s\equiv 1\pmod{f}$, so $s\equiv 0\pmod{{\rm{ord}}_f(p)}$. This implies $q\geq p^{{\rm{ord}}_f(p)}>\left(\sqrt{14}\right)^f>n^2$. By Theorem \ref{improvedbound}, we have $d\leq 2n-4$. Hence $d=2n-4$, proving the claim on the minimum insdel distance of $\mC$. On the other hand, as $f=\Theta(\log q)$ by the assumption $(ii)$ of Theorem \ref{main2}, we obtain $n=\Theta(\sqrt{f})=\Theta(\sqrt{\log q})$, proving the remaining claim on the length $n$ of the code $\mC$. 
\end{proof}

\bigskip

\section{Conclusion}\label{sec:conc}
In this paper, {we proved an improved bound on the minimum insdel distance of an $[n,k,d]_q$-insdel Reed-Solomon codes with $q\geq n^2$. In this case, we obtain $d\leq 2n-2k$, which implies that these codes can never achieve the equality $d=2n-2k+2$ in the Singleton bound (\ref{singleton-insdel})}.
Furthermore, we constructed two explicit families of $2$-dimensional insdel Reed-Solomon codes whose insdel error-correcting capabilities asymptotically {reaching} $n-3$, the capability provided by our {upper} bound on the minimum insdel distance of these codes. Our constructions improve the previously known construction of $2$-dimensional Reed-Solomon codes whose insdel error-correcting capability is only logarithmic in $n$.
To end our paper, we would like to propose several research directions in the study of Reed-Solomon codes under insdel metric.

\begin{enumerate}
\item Both our constructions in Theorems~\ref{main1} and~\ref{main2} focus on codes of dimension $2$. Although arguments presented here may be applied to a higher dimension, it leads to a much more complex analysis. An open question is to construct insdel Reed-Solomon codes with dimension $k\geq 3$ and insdel {error-correcting} capabilities better than $\log_{k+1}n$, the previously known {error-}correcting capability provided in \cite{TS2007}.

\item The deletion correcting algorithm for Reed-Solomon codes has been discussed in~\cite{SW2003} and\cite{2004}. Despite the equivalence between $t$-deletion error-correcting capability and $t$-insdel error-correcting capability, there has not been any decoding algorithm against $t$-insdel error-correcting capability. Moreover, another open question is the construction of a list decoding algorithm of Reed-Solomon codes under insdel metric to investigate the limit of list decodability of insdel Reed-Solomon codes.

\item It is well-known that under Hamming metric, the dual of a Reed-Solomon code is also a Reed-Solomon code. To the best of our knowledge, there has not been any study on the relation between the insdel distance and its dual insdel distance. An investigation of this relation may reveal more insight into insdel Reed-Solomon codes.
\end{enumerate}


\begin{thebibliography}{99}

\bibitem{ber}
J. Bertrand. Mémoire sur le nombre de valeurs que peut prendre une fonction quand on y permute les lettres qu'elle renferme. J\textit{ournal de l'Ecole Royale Polytechnique}, vol. 18, Cahier 30, pp. 123-140, 1845.

\bibitem{Bours}
P. A. H. Bours, On the construction of perfect deletion-correcting codes using design theory, {\it Designs, Codes and Cryptography}, vol. 6, no. 1, pp. 5-20, 1995.

\bibitem{BM2000} E. Brill, R. C. Moore, An Improved Error Model for Noisy Channel Spelling Correction, {\it Proceedings of the 38th Annual Meeting on Association for Computational Linguistics (ACL '00)}, Association for Computational Linguistics, Stroudsburg, PA, USA, pp. 286-293, 2000.

\bibitem{CK2017} Y. M. Chee, H. M. Kiah, A. Vardy, V. K. Vu and E. Yaakobi, Codes Correcting Position Errors in Racetrack Memories, {\it 2017 IEEE Information Theory Workshop (ITW)}, Kaohsiung, pp. 161-165, 2017.

\bibitem{do2019} T. Do Duc, B. Schmidt: Upper Bounds for Cyclotomic Numbers, \textit{Algebr. Comb}. 2019. To appear. arXiv:1903.07314.

\bibitem{VR16}
V. Guruswami and R. Li, Efficiently Decodable Insertion/Deletion Codes for High-Noise and High-Rate Regimes, {\it IEEE International Symposium on Information Theory (ISIT), 2016}, pp. 620-624, 2016.

\bibitem{GS1999}
V. Guruswami and M. Sudan, Improved decoding of Reed—Solomon
and Algebraic—Geometry codes, {\it IEEE Trans. Inf. Theory}, vol. 45, pp.
1757–1767, 1999

\bibitem{GV2005} V. Guruswami and A. Vardy, Maximum-likelihood decoding of Reed-Solomon codes is NP-hard, {\it Proceedings of the sixteenth annual ACM-SIAM symposium on Discrete algorithms (SODA '05)}, pp. 470-478, 2005.

\bibitem{GW11} V. Guruswami and C. Wang. Optimal Rate List Decoding via Derivative Codes, \textit{Approximation, Randomization and Combinatorial Optimization. Algorithms and Techniques,} pp. 593-604, 2011.

\bibitem{stoc2017}
B. Haeupler and A. Shahrasbi, Synchronization Strings: Codes for Insertions and Deletions Approaching the Singleton Bound, {\it Procceedings of the Forty-Ninth Annual ACM Symposium on Theory of Computing,} 2017.

\bibitem{HS2018} B. Haeupler and A. Shahrasbi, Synchronization Strings: Explicit Constructions, local Decoding, and Applications, {\it Proceedings of the 50th Annual Symposium on Theory of Computing (STOC)}, pp. 841-854, 2018.

\bibitem{HSS}
B. Haeupler, A. Shahrasbi, M. Sudan, Synchronization Strings: List Decoding for Insertions and Deletions, {\it 45th International Colloquium on Automata, Languages and Programming (ICALP)}, 2018.

\bibitem{Japan}
T. Hayashi and K. Yasunaga, On the List Decodability of Insertions and Deletions, {\it IEEE International Symposium on Information Theory,} pp. 86-90, 2018.

\bibitem{SJFFH}
S. Jain, F. F. Hassanzadeh, M. Schwartz and J. Bruck, Duplication-Correcting Codes for Data Storage in the DNA of Living Organisms, {\it IEEE Transactions on Information Theory,} vol. 63, no. 8, pp. 4996-5010, 2017.

\bibitem{KV2003}
R. Koetter and A. Vardy, Algebraic soft-decision decoding of Reed-Solomon codes,{\it IEEE Transactions on Information Theory}, vol. 49, no. 11, pp. 2809-2825, November 2003.

\bibitem{1965}
V. Levenshtein, Binary Codes Capable of Correcting Deletions, Insertions and Reversals, {\it Doklady Akademii Nauk SSSR,} vol. 163, no. 4, pp. 845-848, 1965.

\bibitem{1967}
V. Levenshtein, Asymptotically Optimum Binary Code with Correction for Losses of One or Two Adjacent Bits, {\it Problemy Kibernetiki,} vol. 19, pp. 293-298, 1967.

\bibitem{2019insdel}
S. Liu, I. Tjuawinata and C. Xing, On List Decoding of Insertion and Deletion Errors. {\it ArXiv, abs/1906.09705.}, 2019.

\bibitem{mah}
A. Mahmoodi, Existence of perfect $3$-deletion-correcting codes, {\it Designs, Codes and Cryptography}, vol. 14, no.1, pp. 81-87, 1998.


\bibitem{MS2007}
L. McAven and R. Safavi-Naini, Classification of the Deletion Correcting Capabilities of Reed–Solomon Codes of Dimension $2$ Over Prime Fields, {\it IEEE Transactions on Information Theory}, vol. 53, no. 6, pp. 2280-2294, June 2007.

\bibitem{Och03} F. J. Och, Minimum Error Rate Training in Statistical Machine Translation, {\it Proceedings of the 41st Annual Meeting on Association for Computational Linguistics - Volume 1 (ACL '03),Association for Computational Linguistics, Stroudsburg, PA, USA,}, vol. 1. pp. 160-167, 2003.

\bibitem{RR2000} R. M. Roth and G. Ruckenstein, Efficient decoding of Reed-Solomon codes beyond half the minimum distance, {\it IEEE Transactions on Information Theory}, vol. 46, no. 1, pp. 246-257, Jan. 2000.

\bibitem{SW2003} R. Safavi-Naini and Y. Wang, Traitor tracing for shortened and corrupted fingerprints, in {\it Proc. ACM-DRM’02, LNCS}, vol. 2696, pp. 81-100, 2003.

\bibitem{sch2002} B. Schmidt: \textit{Characters and Cyclotomic Fields in Finite Geometry}. Lecture Notes in Mathematics \textbf{1797}, Springer 2002.

\bibitem{Shalaby}
N. Shalaby, J. Wang and J. Yin, Existence of Perfect $4$-Deletion-Correcting Codes with Length Six, {\it Designs, Codes and Cryptography}, vol. 27, no. 1-2, pp. 145-156, 2002.

\bibitem{sin1938} J. Singer: A theorem in finite projective geometry and some applications to number theory. \textit{Trans. Amer. Math. Soc.} \textbf{43} (1938), 377--385.

\bibitem{NJA}
N. J. A. Sloane, On Single-Deletion Correcting Codes, {\it Codes and Designs} Columbus, OH: Math. Res. Inst. Publications, Ohio Univ., vol. 10, pp. 273-291,2002.

\bibitem{Sol1968}
G. Solomon, Self-synchronizing Reed-Solomon codes (Corresp.), {\it IEEE Transactions on Information Theory}, vol. 14, no. 4, pp. 608-609, July 1968.

\bibitem{S1997} M. Sudan, Decoding of Reed Solomon Codes beyond the Error-Correction Bound, {\it J. Complexity}, vol. 13 pp. 180-193, 1997.

\bibitem{1984}
G. M. Tenengolts, Nonbinary Codes, Correcting Single Deletion or Insertion (Corresp.), {\it IEEE Transactions on Information Theory,} vol. 30, no. 5, pp. 766-769, 1984.

\bibitem{TS2007}
J. Tonien and R. Safavi-Naini, Construction of deletion correcting codes using generalized Reed-Solomon codes and their subcodes, {\it Des. Codes Cryptography}, vol. 42, pp. 227-237, 2007.  


\bibitem{VT65} R. R. Varshamov and G. M. Tenengolts, Codes which Correct Single
Asymmetric Errors (in Russian), \textit{Automatika i Telemkhanika}, vol. 161,
no. 3, pp. 288-292, 1965.

\bibitem{2004}
Y. Wang, L. McAven and R. Safavi-Naini, Deletion Correcting Using Generalized Reed-Solomon Codes, {\it Progress in Computer Science and Applied Logic}, vol. 23, pp.345-358, 2004.

\bibitem{RW2005} R. Xu and D. Wunsch, Survey of Clustering Algorithms, {\it IEEE Transactions on Neural Networks}, vol. 16, no. 3, pp. 645-678, 2005.

\bibitem{Yin}
J. Yin, A combinatorial Construction for Perfect Deletion-Correcting Codes, {\it Designs, Codes and Cryptography}, vol. 23, no. 1, pp. 99-110, 2001.












\end{thebibliography}
\end{document}